\newtheorem{corollary}{Corollary}
\newtheorem{proposition}{Proposition}
\newtheorem{lemma}{Lemma}
\renewcommand{\qedsymbol}{$\blacksquare$}
\xpatchcmd{\proof}{\@addpunct{.}}{\@addpunct{:}}{}{}
\begin{document}

%
\title{Phase Shift-Free Passive Beamforming for Reconfigurable Intelligent Surfaces}
%
%
%

\author{Aymen~Khaleel,~\IEEEmembership{Graduate Student Member,~IEEE}, and
        Ertugrul~Basar,~\IEEEmembership{Senior Member,~IEEE}
\thanks{This work was supported by the Scientific and	Technological Research Council of Turkey (TUBITAK) under Grant 120E401.
	
The authors are with the Communications Research and Innovation Laboratory (CoreLab),  Department of Electrical and Electronics Engineering, Ko\c{c} University, Sariyer 36050, Istanbul, Turkey. \mbox{Email: akhaleel18@ku.edu.tr, ebasar@ku.edu.tr}}
}

\maketitle

\begin{abstract}
 Reconfigurable intelligent surface (RIS)-assisted communications recently appeared as a game-changing technology for next-generation wireless communications due to its unprecedented ability to reform the propagation environment. One of the main aspects of using RISs is the exploitation of the so-called passive beamforming (PB), which is carried out by adjusting the reflection coefficients (mainly the phase shifts) of the individual RIS elements. However, practically, this individual phase shift adjustment is associated with many issues in hardware implementation, limiting the RIS achievable gain. In this paper, we propose a low-cost, phase shift-free and novel PB scheme by only optimizing the on/off states of the RIS elements while fixing their phase shifts. The proposed PB scheme is shown to achieve the same scaling law (quadratic growth with the RIS size) for the signal-to-noise ratio as in the classical phase shift-based PB scheme, yet, with far less sensitivity to spatial correlation and phase errors. We provide a unified mathematical analysis that characterizes the performance of the proposed PB scheme and obtain the outage probability for the considered RIS-assisted system. Based on the provided computer simulations, the proposed PB scheme is shown to have a clear superiority over the classical one under different performance metrics.
\end{abstract}

\begin{IEEEkeywords}
Reconfigurable intelligent surfaces, passive beamforming, outage probability, ergodic rate.
\end{IEEEkeywords}
%
\IEEEpeerreviewmaketitle
%
%
%
%
\section{Introduction}
\IEEEPARstart{R}{econfigurable}  intelligent surface (RIS)-based communications has recently emerged as a promising technology for beyond fifth-generation (5G) wireless networks due to its unprecedented capabilities to reform the communication environment \cite{Transmission_conference}. RISs contain an array of passive and low-cost elements, where each element can reflect (and modify) or fully absorb the electromagnetic waves impinging its surface. Due to their unique properties, the RIS literature has considered the integration of RISs almost to all existing wireless communication systems, making this emerging technology a potential game-changing factor for next-generation wireless communications systems. For example, in \cite{Transmission_conference}, an RIS is used to achieve an ultra-reliable wireless communication system, while in \cite{EB2}, an RIS is used along with index modulation to boost the data rate. RISs can also be used to replace (reduce) some of the RF chains as in \cite{AK1} and \cite{{Expermintal-RIS}}, where the modulation process is moved from the transmitter to the RIS side. Furthermore, in \cite{AK2} and \cite{mahmoud}, the authors used,  respectively, the RIS and the simultaneous transmitting and reflecting intelligent surface (STAR-RIS), to assist non-orthogonal-multiple-access (NOMA) networks through novel partitioning schemes.
\vspace{-0.2cm}
\subsection{Related Works}
\subsubsection{Studies on phase shift-based PB schemes}
In RIS-assisted systems, the PB design is achieved by independently adjusting the reflection coefficient associated with each RIS element to align the reflected signals in a particular direction. This can be achieved, for example, by using a tunable impedance that can be continuously adjusted through mixed-signal integrated circuits (ICs) \cite{phs-cont1} or multilayer surface design \cite{phs-cont2}. In vast majority of RIS-assisted systems discussed in the literature, the adopted classical PB design fixes the reflection amplitude to unity (to achieve full reflection power) and adjusts only the phase shift associated with each RIS element. The main practical issues related to the classical phase shift-based PB design are discussed next from the hardware implementation perspective.

It is possible, yet, practically challenging to adjust the phase of each RIS element continuously as this requires a sophisticated and expensive hardware design and more control pins connected to each element \cite{Toward-smart}, which overrides the essential features of RISs to be a low-cost and easy-to-deploy solution. Therefore, a discrete phase shift design is considered at implementation, where each RIS element is connected to multiple positive-intrinsic-negative (PIN) diodes, and the combinations of the on/off states (through control signals) of these diodes produce the different required phase shifts \cite{phs-disc}. However, since RISs are envisioned to be deployed with a large number (could reach thousands) of elements \cite{Expermintal-RIS}, \cite{amp-RFocus}, having multiple control lines for each RIS element to achieve multiple phase shift levels is still a practical issue as it may cause configuration overhead. A common approach to solve this issue is to partition the RIS into multiple sub-surfaces, where a common phase shift can be applied to all the elements within the sub-surface to reduce the overall needed number of control lines \cite{phs-group1}, \cite{phs-group2}, at the expense of some degradation in the PB performance. Another issue is the phase shift sensitivity to the incident wave's angle, where it is reported in \cite{Expermintal-RIS} and \cite{phs-incid} that this problem is inherent in the RIS design and can break the channel reciprocity of wireless RIS-assisted systems. Furthermore, the classical phase shift model adopted in most of the literature assumes the independent adjustment of the amplitude and phase reflection coefficients. However, based on the real implementation in \cite{phs-amp-exp1} and the mathematical modeling in \cite{main_pract}, this assumption does not hold practically, and there are always amplitude-phase-dependent variations. In particular, the authors in \cite{main_pract} showed that applying the classical phase shift design and ignoring the amplitude dependency may lead to a $5$ dB performance loss compared to the ideal case with no amplitude-phase dependency. Likewise, in studies of \cite{phs-frq-wide,phs-frq-mimo,phs-frq-emwm}, it is shown that the amplitude is both phase and frequency-dependent, which necessitates a special phase shift design to capture the correlation between all these variables. In order to mitigate the impact of the reflection amplitude-phase correlation, the authors in \cite{opt_ris} proposed a heuristic phase-shift-adjustment algorithm to trade off the phase alignment against the amplitude reflection loss. In the same context, the mutual coupling between RIS elements \cite{phs-mutual1} is another critical issue that affects the phase shift adjustment accuracy and performance. In particular, it is shown in \cite{phs-mutual2} that the actual capacitance, and consequently the phase shift of each RIS element is $45\%$ determined by its intended capacitance and $55\%$ by the capacitance of the neighboring element.

In \cite{phs-err2-let} and \cite{spatial_ris}, it is shown that the classical PB strategy, where the channel phases are all aligned to a particular direction, is not always the best one when considering practical system settings such as spatial correlation and phase errors. In particular, the authors in \cite{spatial_ris} show that the classical PB under the spatial correlation between RIS elements leads to performance degradation for the information transfer, while it is preferable for energy harvesting. Considering the phase errors, the authors in \cite{phs-err2-let} showed that a blind PB scenario occurs when using classical PB, particularly under high (uniform) phase error levels.  
\vspace{0.2cm}
\subsubsection{Studies on reflection amplitude-based PB schemes}
In \cite{main_pract} and \cite{phs-frq-wide}, RIS phase shifts are optimized by considering the phase-amplitude and frequency-amplitude dependent variations, respectively, while in \cite{phs-frq-mimo} are both phase and frequency-dependent amplitude variations considered. In \cite{amp-RFocus}, the authors perform PB by only turning on/off each element using a majority voting algorithm based on the received signal strength indicator (RSSI) measurements. In \cite{near_pas}, the authors considered an RIS architecture where a fixed arbitrary phase shift is assigned to each element at the manufacturing stage, while the PB is carried out by jointly optimizing the on/off states of the individual elements. In \cite{exp_ris}, the authors design their PB vector by optimizing both the amplitude and phase reflection coefficients. Finally, in \cite{amp-on-off1} and \cite{amp-on-off2}, each element is allowed to have on/off states that are jointly optimized with the phase shifts to obtain the PB vector. 
\vspace{-0.2cm}
\subsection{Motivation and Contributions}
In light of the earlier discussions on hardware implementation and system performance issues associated with the classical phase shift-based PB, we propose a low-complexity phase shift-free novel PB scheme for RISs. In the proposed PB scheme, to achieve full reflection power \cite{phs-amp-exp1}, \cite{main_pract}, the RIS elements are assumed to be designed with a common and fixed phase shift ($\pi$). Thus, to perform PB, we utilize the channel state information (CSI) of the RIS elements to activate only the elements enhancing the constructive combining effect of the reflected signals. Note that the proposed PB scheme is different than that of\cite{near_pas}, which still suffers from arbitrary amplitude reflection loss due to the phase-amplitude dependent variations \cite{main_pract}. Furthermore, unlike our proposed PB scheme, \cite{amp-RFocus} depends on the changes in the RSSI measurements to turn on/off the individual elements; however, it is challenging to track these changes on the level of a single element. Moreover, even when turning on/off a group of elements, the changes in the RSSI measurements cannot be tracked accurately due to the incoherent interference of the signals reflected from these elements. In addition, unlike the statistical CSI-based phase shift design methods \cite{S-CSI}, our proposed PB scheme is not meant to reduce the channel overhead but instead, aims to overcome the practical issues related to the classical continuous phase shift-based PB. The main contributions of this paper can be listed as follows:

\begin{itemize}
	\item We propose a low-complexity, phase shift-free and novel PB scheme to avoid hardware implementation and system performance degradation issues associated with the classical phase shift-based PB \cite{Toward-smart,phs-group1,phs-group2, phs-incid, phs-amp-exp1,main_pract,phs-frq-wide,phs-frq-mimo,phs-frq-emwm,phs-mutual1,phs-mutual2,phs-err2-let,spatial_ris}.
	\item We provide a unified mathematical analysis that characterizes the performance of the proposed PB scheme. First, we derive the asymptotic activation probability associated with each RIS element. Second, the lower bound for the average number of activated RIS elements is obtained. Third, we obtain lower and upper bounds for the probability of activating a certain number of RIS elements at each transmission. Forth, we show that the proposed scheme provides the same scaling law for the signal-to-noise ratio (SNR) as in the classical phase shift-based PB, where the SNR grows quadratically with the RIS size. This is achieved with a complexity level that is linear in the RIS size.
	\item We characterize the performance of the proposed RIS-assisted system by obtaining its outage probability and a tight upper bound to the ergodic rate.
	\item Under spatial correlation and phase errors, we provide comprehensive computer simulations to reveal the superiority of the proposed scheme over the two considered benchmark schemes \cite{main_pract} and \cite{opt_ris} in terms of the outage probability and ergodic rate performance.
\end{itemize}

 The rest of the paper is organized as follows. In Section II, we describe the system model and explain the concept of the proposed PB scheme. In Section III, we provide the mathematical analysis and performance evaluation of the proposed PB scheme. Computer simulations are given in Section V and the paper is concluded in Section VI.
\section{phase shift-Free PB: System Model}\label{sec:Main}
 Consider a downlink single-input single-output (SISO) system with an $N$-element reconfigurable intelligent surface (RIS) deployed close to the source (S), providing an alternative communication link to the blocked S-Destination (D) one, as shown in Fig. \ref{fig:system_block}(a). Let $\eta_n$ and $\theta_n$ denote the reflection amplitude and phase of the $n^{th}$ element, respectively, thus; the received signal at D is given as
 \begin{align}
 	y&=\sqrt{PL}\smash[b]{\sum}_{n=1}^{N}h_n\eta_n e^{j\theta_n} g_n x+v,\label{eq:y_received}
 \end{align}
 where $P$ and $x$ are the transmitted signal and power, respectively,  $L=L^{(S)}L^{(D)}$, $L^{(S)}$ and $L^{(D)}$ being the S-RIS and RIS-D path gains. $h_n=\alpha_ne^{-j\phi_n}$ and  $g_n=\beta_n e^{-j\psi_n}$ are the S-RIS ($n^{th}$ element) and RIS ($n^{th}$ element)-D channel coefficients, where ($\alpha_n$, $\phi_n$) and ($\beta_n$, $\psi_n$) are the phase and amplitude pairs of $h_n$ and $g_n$, respectively. $h_n$ and $g_n$  are assumed to be mutually independent and identically distributed (i.i.d.) random variables (RVs), in particular, $h_n, g_n\sim\mathcal{CN}(0,1)$, where $\mathcal{CN}(0,\sigma^2)$ stands for complex Gaussian distribution with zero mean and $\sigma^2$ variance. $v$ is the additive white Gaussian noise (AWGN) sample at D, $v\sim\mathcal{CN}(0,\sigma^2)$. By considering a practical RIS design (under the narrowband signal assumption), for the $n^{th}$ RIS element, the amplitude $\eta_n$ and phase $\theta_n$ reflection coefficients are correlated such that\footnote{Note that the circuit model in \cite{main_pract} is verified with the experimental results reported in \cite{phs-amp-exp1} and \cite{phs-amp-exp2}. Furthermore, \eqref{eq:theta_amp} is also adopted in \cite{opt_ris}, and it is applicable to many of the semiconductor devices usually used to build RISs \cite{main_pract}. Finally, an amplitude-phase-frequency correlation formulas similar to \eqref{eq:theta_amp} are provided in \cite{phs-frq-mimo}.}\cite{main_pract}
 \begin{figure}[t]
 	\subfloat[]{\includegraphics[width=45mm]{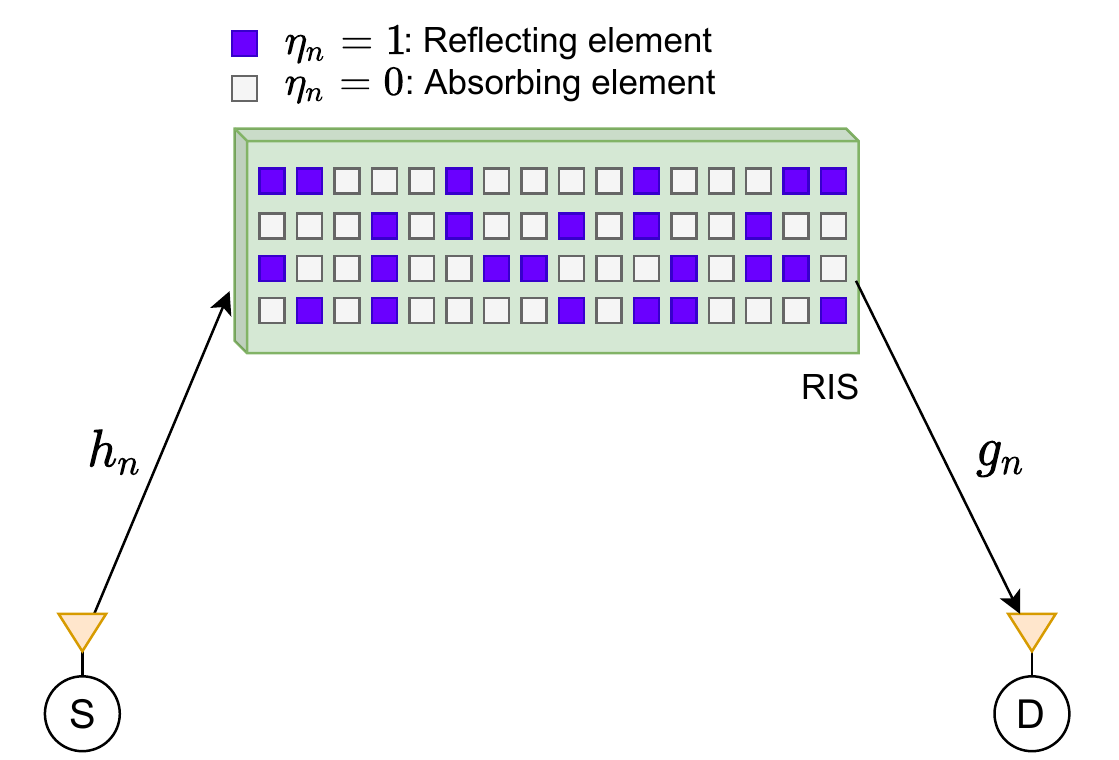}}	\subfloat[]{\includegraphics[width=35mm]{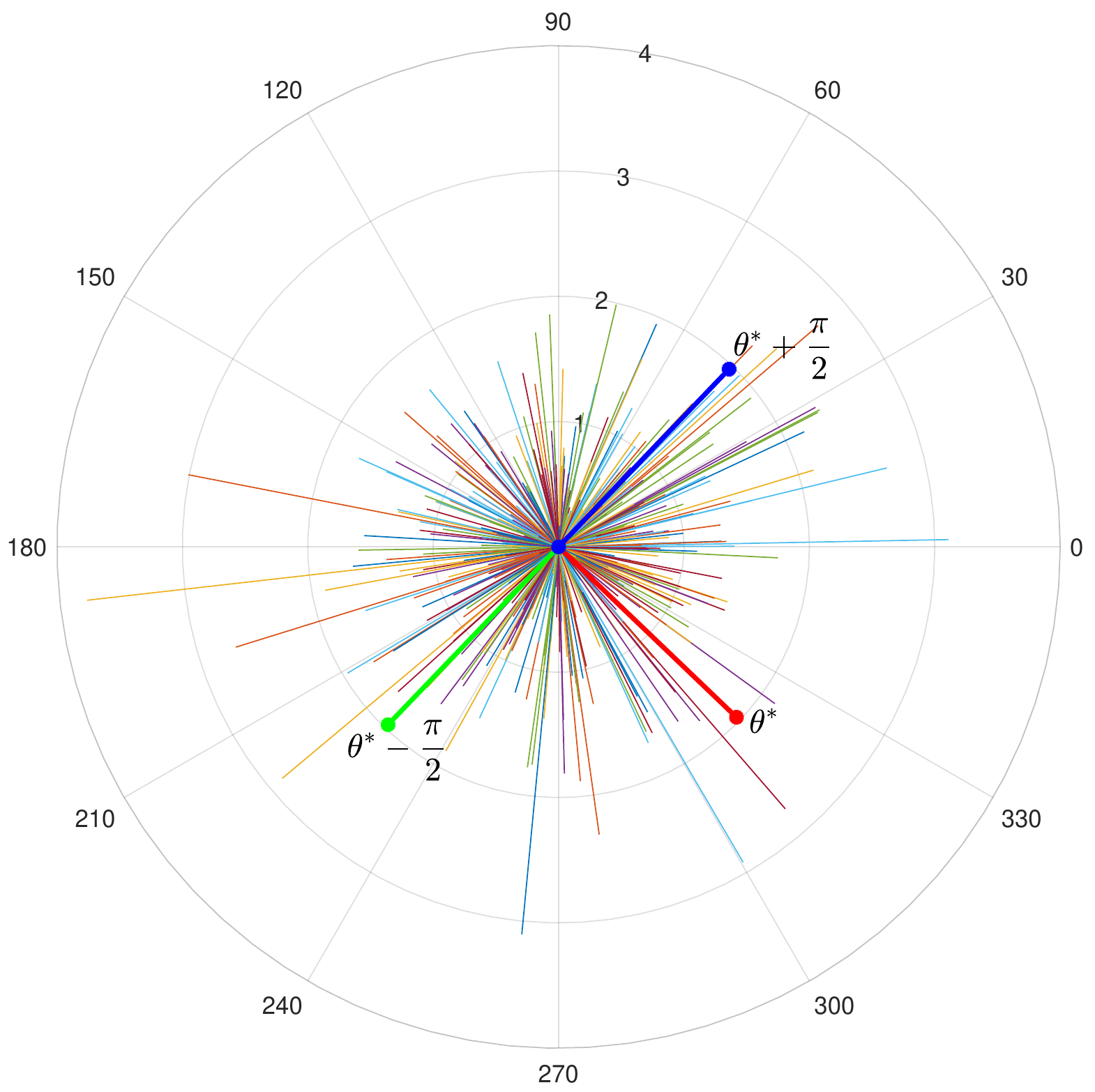}}
 	\caption{(a) RIS-assisted SISO communications system, (b) channel coefficients as 2D vectors (amplitudes and phases) in polar coordinates.}
 	\label{fig:system_block}
 \end{figure}
 \begin{align}
 	\eta_n(\theta_n)&=(1-a_{\min})\left(\frac{\sin(\theta_n-b_{hrz})+1}{2}\right)^{c_{stp}}+a_{\min},\label{eq:theta_amp}
 \end{align}
 where $a_{\min}\geq 0$, $b_{hrz}\geq 0$, and $c_{stp}\geq 0$ are the constants related to the specific circuit implementation.  $a_{\min}$ is the minimum amplitude, $b_{hrz}$ is the horizontal distance
 between $-\pi/2$ and $a_{\min}$, and $c_{stp}$ controls the steepness of the function curve \footnote{Note that when $a_{\min} = 1$ (or $c_{stp} = 0$), we obtain the ideal case of the phase shift model with full reflection and independent phase and amplitude adjustment. Practically, these parameters are determined at the RIS manufacturing stage, and then, these parameters can be obtained by a standard curve fitting tool.} \cite{main_pract}. In particular, we consider the RIS design in \cite{phs-amp-exp1}, which is also considered in \cite{opt_ris}, where we have $a_{\min}=0.2, b_{hrz}=0.43\pi$, and $c_{stp}=1.6$. 
 
 In the classical PB design, in order to maximize the SNR at D, the RIS phase shifts are adjusted such that $\theta_n=-(\phi_n+\psi_n)+e_n$, where $e_n$ is the residual phase error, and it is assumed to have mutually i.i.d. instances. Note that $e_n$ exists due to the phase estimation errors and/or quantization errors when producing the set of discrete phase shifts \cite{phs-err1-uni,phs-err2-let,phs-err3-let}. In particular, $e_n$ is assumed to have Von Mises distribution with a zero mean and a concentration parameter $\kappa$ \cite{von_mises}. On the other side, in the proposed scheme, we avoid all of the aforementioned problems by applying a fixed phase shift to all elements at the manufacturing stage of the RIS. Specifically, we apply a phase shift of $\pi$ for all elements ($\theta_n=\pi, \forall n$); hence, we guarantee the full reflection of the incident waves \cite{phs-amp-exp1}, \cite{main_pract}. In this way, in order to maximize the SNR, the on/off states of the individual RIS elements need to be jointly optimized to reach the best pattern that enhances the constructive combining of the signals received at D. It is not difficult to see that this optimization problem is of the non-convex combinatorial (binary) types which is a non-deterministic polynomial-time (NP) hard problem if we considered the exhaustive search solution \cite{NP}. Therefore, as follows, we propose a sub-optimal solution to find the on/off states pattern of the RIS elements that maximizes the SNR at D.
 
 In order to get an intuitive insight, we approach the problem from a geometrical perspective. More specifically, the S-RIS-D channel coefficients are complex numbers that can be represented as two-dimensional (2D) vectors in the complex plane, where the real and imaginary parts are the first and second dimensions, respectively. To obtain a geometrical insight, we represent the 2D vectors in polar coordinates as shown in Fig. \ref{fig:system_block}(b), where the length and angle of each vector correspond to the channel amplitude and the phase shift between the real and imaginary parts, respectively. Thus, finding the set of vectors that have the maximum length of the resultant vector associated with their sum is equivalent to finding the on/off state's pattern that maximizes the SNR at D. Therefore, we propose Algorithm 1 to find the aforementioned desired set of vectors, as follows. In Algorithm 1, at the first stage, we find the dominant direction ($\theta^*$) where most of the vectors are pointing towards it. Next, we consider only the vectors lie within the span of $\pm\frac{\pi}{2}$ from the dominant direction and activate ($\eta_n=1$) the RIS elements associated with them, see Fig. \ref{fig:system_block}(b). At the second stage, we seek any vector within the ones excluded at the first stage that can increase the overall sum obtained before. From \eqref{eq:y_received}, the instantaneous SNR is given by
\begin{align}
\text{SNR}&=\frac{\left|\sqrt{PL}\smash[b]{\sum}_{n=1}^{N}h_n\eta_n e^{j\theta_n} g_n\right|^2}{\sigma^2}\nonumber\\
&\stackrel{(a)}{=}\frac{\left|\sqrt{PL}\sum_{n^*=1}^{N_a}h_{n^*}g_{n^*}\right|^2}{\sigma^2},\nonumber\\
&=L\rho\bar{H}\;\label{eq:SNR}
 \end{align}
where $\rho=\frac{P}{\sigma^2}$ is the transmit SNR, $\bar{H}=|\sum_{n^*=1}^{N_a}h_{n^*}g_{n^*}|^2$. Furthermore, we obtain (a) by applying Algorithm 1, in particular, we set $\theta_n=\pi, \forall n$, where $n^*\in\mathcal{R}^*$ denotes the index of the activated RIS element and $N_a=|\mathcal{R}^*|$ is the total number of the activated elements. Here, $|\mathcal{R}^*|$ denotes the set cardinality. In what follows, we provide the steps of Algorithm 1 and give useful insights into its performance.
 \begin{algorithm}
 	\label{detctorA}
 	\caption{Phase shift-free PB}
 	\begin{algorithmic}[1]
 		\REQUIRE $h_n$, $g_n, \forall n$.
 		\STATE \textbf{Initialization:} $\mathbf{\Gamma}=\text{diag}(\eta_1, ..., \eta_N), \eta_n=0, \forall n$.
 		\STATE Obtaining the dominant direction:
 	 $\theta^*=\arg(\sum_{n=1}^{N}h_ng_n)$.
 	 	\STATE Specifying the endpoints of the activation region: $c_1=\theta^*-\frac{\pi}{2}, c_2=\theta^*+\frac{\pi}{2}$.
 		\STATE Let $\mathcal{R}^*=\{\emptyset\}, Z_n=\arg(h_ng_n)$, and $\Pi(s)$ be the function that maps the value of $s$ to the natural range of an angle, $[0,2\pi)$, as defined in \eqref{eq:pi_fun}.
 		\FOR{$n=1:N$}
 		\IF{$\Pi(c_1)\leq \Pi(Z_n)\leq \Pi(c_2)$}
 		\STATE $\eta_n=1$.
 		\STATE $\mathcal{R}^*=\mathcal{R}^*\cup\{n\}$.
 		\ENDIF
 		\ENDFOR
 		\STATE Update the entries of $\mathbf{\Gamma}$ according to the above loop, and let $\mathbf{h}=[h_1, ..., h_n, ..., h_N]$ and  $\mathbf{g}=[g_1, ..., g_n, ..., g_N]^T$, where $(\cdot)^T$ is the transpose operator.
 		\STATE Construct the new set $\tilde{{R}}=\{1, ..., N\}\setminus\mathcal{R}^*$.
 		\FOR {$m=1:|\tilde{{R}}|  $}
 		\STATE $n=\tilde{{R}}^{(m)}$.
 		\IF{$|\mathbf{h}\mathbf{\Gamma}\mathbf{g}+h_n\eta_ng_n|>|\mathbf{h}\mathbf{\Gamma}\mathbf{g}|$}
 		\STATE $\eta_n=1$.
 		\STATE $\mathcal{R}^*=\mathcal{R}^*\cup\{n\}$.
 		\ENDIF
 		\ENDFOR	
 		\RETURN $\eta_1, ..., \eta_N$.
 	\end{algorithmic}
\end{algorithm}
\section{Mathematical Analysis and Performance Evaluation of phase shift-Free PB}\label{sec:OP}
In this section, we characterize the performance of the proposed PB scheme by deriving the probabilities associated with the activation of each RIS element, the minimum number of activated elements at each transmission, the outage probability, and ergodic rate. Finally, we derive the scaling law associated with the growth of the SNR in proportion to $N$. Note that, in the following analyses, we assume no phase errors ($e_n=0$), and we consider it later in our computer simulations.
\begin{lemma}
The direction of the 2D vector associated with the $n^{th}$ S-RIS-D channel coefficient can be characterized by the random angle $Z_n=\arg(h_ng_n)$, which follows the triangle distribution with the following probability density function (PDF):
\begin{align}
f_{Z_n}(z)=
\begin{cases}
\frac{z+2\pi}{4\pi^2},&\;\;\; -2\pi\leq z< 0\nonumber \\
\frac{-z+2\pi}{4\pi^2},&\;\;\; 0\leq z< 2\pi\nonumber\\
0,&\;\;\;\text{otherwise}.\label{eq:z_pdf}
\end{cases}\\
\end{align}
\end{lemma}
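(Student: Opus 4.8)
The plan is to reduce the statement to a convolution of two independent uniform densities. First I would exploit the circular symmetry of the complex Gaussian law: since $h_n\sim\mathcal{CN}(0,1)$ is zero-mean and circularly symmetric, its phase $\theta_h := \arg(h_n)$ is uniformly distributed on $(-\pi,\pi]$ and independent of the amplitude $\alpha_n$; the same holds for $\theta_g := \arg(g_n)$. Because $h_n$ and $g_n$ are independent, $\theta_h$ and $\theta_g$ are two independent random variables, each with the uniform density $1/(2\pi)$ on $(-\pi,\pi]$.

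Next I would write the angle of the product as the (unwrapped) sum of the two phases. Since the phase of a product is the sum of the phases, the direction of the associated 2D vector is $Z_n = \arg(h_n g_n) = \theta_h + \theta_g$, where the sum is taken \emph{without} reduction modulo $2\pi$, so that $Z_n$ ranges over $(-2\pi,2\pi]$. Thus $Z_n$ is the sum of two independent uniform variables, and its PDF is the convolution $f_{Z_n}(z)=\int f_{\theta_h}(u)\,f_{\theta_g}(z-u)\,du$.

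The core computation is then the standard uniform-uniform convolution. The integrand equals $1/(4\pi^2)$ exactly on the overlap $\{u:\,-\pi<u\le\pi,\ -\pi<z-u\le\pi\}$ and vanishes elsewhere, so $f_{Z_n}(z)$ is $1/(4\pi^2)$ times the length of the intersection $(-\pi,\pi]\cap(z-\pi,z+\pi]$. I would split into the two symmetric cases: for $-2\pi\le z<0$ the overlap is $(-\pi,\,z+\pi]$ of length $z+2\pi$, giving $f_{Z_n}(z)=(z+2\pi)/(4\pi^2)$; for $0\le z<2\pi$ the overlap is $(z-\pi,\,\pi]$ of length $2\pi-z$, giving $f_{Z_n}(z)=(-z+2\pi)/(4\pi^2)$. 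This matches the claimed triangular PDF, and verifying $\int_{-2\pi}^{2\pi} f_{Z_n}(z)\,dz = 1$ confirms normalization.

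The main obstacle is conceptual rather than computational: I must be careful that $Z_n$ is interpreted as the unwrapped phase sum on $(-2\pi,2\pi]$ rather than the principal argument of $h_n g_n$. The principal value $\arg(h_n g_n)\in(-\pi,\pi]$ is obtained by reducing $Z_n$ modulo $2\pi$, and — being the phase of a circularly symmetric complex quantity — it is uniform, not triangular. The triangular shape is precisely the structure lost under wrapping, and keeping the two phases summed (consistent with the polar-vector picture of Fig.~\ref{fig:system_block}(b) and with the mapping $\Pi(\cdot)$ used in Algorithm~1) is what produces the concentration toward the dominant direction that the activation rule exploits.
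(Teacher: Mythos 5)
Your proposal is correct and follows essentially the same route as the paper: uniform, independent phases from the circular symmetry of $h_n$ and $g_n$, followed by the convolution of two rectangular densities yielding the triangular pulse on $[-2\pi,2\pi]$. Your explicit case-splitting of the overlap and the remark distinguishing the unwrapped phase sum from the (uniform) principal argument are welcome elaborations, but they do not change the argument.
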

\begin{proof}
Note that $\phi_n$ and $\psi_n$ are associated with the circular complex Gaussian (CCG) RVs $h_n$ and $g_n$, respectively, therefore we obtain $\phi_n,\psi_n\sim\mathcal{U}(-\pi,\pi)$ with $f_{\phi_n}(z)=f_{\psi_n}(z)=\frac{1}{2\pi}$, where $\mathcal{U}(a,b)$ denotes the uniform distribution over the interval $[a,b]$. Due to the independence of $\phi_n$ and $\psi_n$, the PDF of $Z_n=\arg(h_ng_n)=\phi_n+\psi_n$ can be found as follows:
\begin{align}
f_{Z_n}(z)=(f_{\phi_n}*f_{\psi_n})(z)=\int_{-\infty}^{\infty}f_{\phi_n}(\tau)f_{\psi_n}(z-\tau)d_{\tau}.\label{eq:conv}
\end{align}
Note that \eqref{eq:conv} corresponds to the convolution of a rectangular pulse with itself, which results in a triangular pulse over the interval $[-2\pi,2\pi]$ as in \eqref{eq:z_pdf}.  
  	\vspace{-0.25cm}
\end{proof}
Lemma 1 shows that the direction (phase) associated with the S-RIS-D  channel is concentrated around phase zero, for any given RIS element. In general, the S-RIS-D channel's phase is directly determined by the distribution of S-RIS and RIS-D channels. 
\begin{lemma}
	The dominant direction can be characterized as a random angle $\theta^*$ which, asymptotically, converges to the uniform distribution, that is, $\theta\xrightarrow{d}\mathcal{U}(-\pi,\pi)\;\text{as}\; N\rightarrow\infty$.
\end{lemma}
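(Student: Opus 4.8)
The plan is to exploit the fact that $\theta^*=\arg\!\left(\sum_{n=1}^{N}h_ng_n\right)$ is the argument of a sum of i.i.d.\ terms and to invoke a complex central limit theorem (CLT). Writing $W_n=h_ng_n$, the first step is to collect the relevant moments. Since $h_n,g_n\sim\mathcal{CN}(0,1)$ are independent with zero mean, $\mathbb{E}[W_n]=\mathbb{E}[h_n]\mathbb{E}[g_n]=0$ and $\mathbb{E}[|W_n|^2]=\mathbb{E}[|h_n|^2]\mathbb{E}[|g_n|^2]=1$. The crucial structural property, however, is circular symmetry: because $h_n$ is a circularly symmetric complex Gaussian, $e^{j\gamma}h_n$ has the same law as $h_n$ for every fixed $\gamma$, and since this rotation is independent of $g_n$ we get $e^{j\gamma}W_n=(e^{j\gamma}h_n)g_n\stackrel{d}{=}h_ng_n=W_n$. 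Hence each $W_n$ is circularly symmetric, which in particular forces the pseudo-variance $\mathbb{E}[W_n^2]=0$. This is fully consistent with Lemma 1, whose triangular law on $(-2\pi,2\pi)$ wraps to a uniform angle on $(-\pi,\pi)$.

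Given these moments, the second step applies the bivariate CLT to the real and imaginary parts of $\tfrac{1}{\sqrt N}\sum_{n=1}^N W_n$. Circular symmetry makes these parts uncorrelated with equal variance $\tfrac12$, so the normalized sum converges in distribution to a circularly symmetric complex Gaussian, $\tfrac{1}{\sqrt N}\sum_{n=1}^N W_n\xrightarrow{d}\mathcal{CN}(0,1)$. The key observation linking this to $\theta^*$ is that the argument is invariant under multiplication by the positive scalar $1/\sqrt N$, i.e.\ $\theta^*=\arg\!\left(\tfrac{1}{\sqrt N}\sum_{n=1}^N W_n\right)$, so it suffices to track the phase of the normalized sum.

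Finally, I would pass the limit through the phase map by the continuous mapping theorem. Viewing the argument as the map $z\mapsto z/|z|$ into the unit circle, it is continuous on $\mathbb{C}\setminus\{0\}$; since the limiting Gaussian has a density it assigns zero mass to the origin (its only discontinuity point), the hypotheses of the theorem hold and $\theta^*\xrightarrow{d}\arg(Z)$ with $Z\sim\mathcal{CN}(0,1)$. The phase of a circularly symmetric complex Gaussian is uniform on $(-\pi,\pi)$, which gives the claim. I expect the main obstacle to be the bookkeeping that certifies the limit is \emph{circularly} symmetric (vanishing pseudo-variance), since a non-circular Gaussian limit would have elliptical level sets and hence a non-uniform phase; the secondary care is the discontinuity of $\arg$, which is why it is cleanest to phrase the continuous-mapping step on the circle rather than on the branch-cut interval $(-\pi,\pi]$. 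As a remark, applying the same circular-symmetry argument directly to $S_N=\sum_n W_n$ shows $S_N$ is circularly symmetric for \emph{every} finite $N$, so in fact $\theta^*\sim\mathcal{U}(-\pi,\pi)$ exactly and the asymptotic statement is slightly conservative.
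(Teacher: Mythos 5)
Your proposal is correct and its main line --- CLT on $\sum_n h_ng_n$, circular symmetry of the Gaussian limit, uniform phase --- is the same route the paper takes. The value you add is in the two places where the paper's argument is thin. First, the paper asserts that the sum converges to a \emph{circularly symmetric} complex Gaussian ``irrespective of the distribution of $w_n$ and $Z_n$,'' which is not true in general: the bivariate CLT only delivers a circular limit when the pseudo-variance vanishes, and your verification that $\mathbb{E}[W_n^2]=\mathbb{E}[h_n^2]\mathbb{E}[g_n^2]=0$ (equivalently, that $e^{j\gamma}W_n\stackrel{d}{=}W_n$) is exactly the missing certificate. Second, your closing remark is effectively a second, more elementary proof that supersedes the CLT argument: since rotating every $h_n$ by a common $\gamma$ preserves the joint law and is independent of the $g_n$, the finite sum $S_N=\sum_n h_ng_n$ is circularly symmetric for \emph{every} $N$, and because $S_N$ has a density it avoids the origin almost surely, so $\theta^*=\arg(S_N)$ is \emph{exactly} $\mathcal{U}(-\pi,\pi)$ for all $N\geq 1$. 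That observation makes the lemma's asymptotic phrasing (and the paper's surrounding discussion that for small $N$ the dominant direction ``depends on the weights'') strictly weaker than what is true under the stated i.i.d.\ $\mathcal{CN}(0,1)$ model; the dependence on $N$ only matters once one leaves the circularly symmetric setting (e.g., under spatial correlation with a nonzero-mean or non-circular channel). Your handling of the branch-cut issue via the continuous mapping theorem applied to $z\mapsto z/|z|$ is also a correct repair of a step the paper does not address.
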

\begin{proof}
	The dominant direction can be found by obtaining the resultant vector of the sum of all the channels' coefficients (2D vectors) and then obtaining the angle associated with it. In light of Lemma 1, we obtain
	\begin{align}
		\theta^*=\arg\left(\sum_{n=1}^{N}w_ne^{jZ_n}\right),\label{eq:theta}
	\end{align}
where, $w_n=|h_ng_n|$, and the product $h_ng_n$ is the overall S-RIS-D channel. It can be seen that $\theta^*$ in \eqref{eq:theta} corresponds to the angle associated with a sum of $N$ weighted 2D unit vectors. Furthermore, irrespective of the distribution of $w_n$ and $Z_n$, according to the central limit theorem (CLT), the sum in \eqref{eq:theta} converges to the CCG distribution, $\sum_{n=1}^{N}w_ne^{jZ_n}\sim\mathcal{CN}(0,N)$, as $N\rightarrow\infty$. Consequently, due to the circular complex property of the sum, its argument is uniformly distributed, $\theta^*\in\mathcal{U}(-\pi,\pi)$, which completes the proof.
\end{proof}
The result obtained in Lemma 2 shows that when a large RIS is considered, the dominant direction tends to be equally likely in any direction. However, with a small RIS size, the dominant direction directly depends on the wights (channels' amplitudes) associated with their corresponding phases.
\begin{lemma}
Asymptotically, $\theta^*, Z_1, ..., Z_n, ..., Z_N$ can be assumed to be pairwise (but not mutually) independent RVs.
\end{lemma}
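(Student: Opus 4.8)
The plan is to split the assertion into two sub-claims and handle them separately: the mutual independence of the angles $Z_1,\dots,Z_N$ among themselves, and the asymptotic independence of the dominant direction $\theta^*$ from each single $Z_n$; the failure of \emph{full} mutual independence then follows from a functional-dependence observation. The first sub-claim is immediate: since $Z_n=\arg(h_ng_n)$ is a function of the channel pair $(h_n,g_n)$ alone, and these pairs are i.i.d.\ across $n$, the $Z_n$ are not merely pairwise but fully mutually independent. Thus all the work lies in the pair $(\theta^*,Z_n)$.

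For a fixed index $n$, I would write the inner sum as $S=\sum_{m=1}^{N}h_mg_m=h_ng_n+S_{-n}$, where $S_{-n}=\sum_{m\neq n}h_mg_m$ (recall $w_me^{jZ_m}=h_mg_m$). By the same CLT argument invoked in Lemma 2, $S_{-n}\sim\mathcal{CN}(0,N-1)$ as $N\to\infty$, so $|S_{-n}|$ concentrates around $\sqrt{N-1}$ and grows without bound, while the excluded term has magnitude $|h_ng_n|$ with finite mean and is therefore $O(1)$. Adding a bounded vector to one of magnitude $\Theta(\sqrt{N})$ perturbs its argument by at most an amount of order $|h_ng_n|/|S_{-n}|=O(1/\sqrt{N})$, which vanishes; hence $\theta^*=\arg(S)\to\arg(S_{-n})$ as $N\to\infty$. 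Since $\arg(S_{-n})$ is a function only of $\{(h_m,g_m):m\neq n\}$, it is independent of $(h_n,g_n)$ and in particular of $Z_n$, giving the asymptotic pairwise independence of $\theta^*$ and $Z_n$; by symmetry this holds for every $n$.

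For the negative half of the statement, I would observe that $\theta^*$ is a deterministic function of the whole collection $\{(h_m,g_m)\}_{m=1}^{N}$, so conditioning on all of $Z_1,\dots,Z_N$ together with the weights fixes $\theta^*$ entirely; consequently the joint law of $(\theta^*,Z_1,\dots,Z_N)$ cannot factor into a product of marginals, ruling out full mutual independence. The step I expect to be the main obstacle is the rigorous passage from ``the angular perturbation tends to $0$'' to ``the limiting conditional distribution of $\theta^*$ given $Z_n=z$ is free of $z$.'' Closeness of $\theta^*$ to $\arg(S_{-n})$ alone is not enough: one must show that the limiting law of $\theta^*$ conditioned on $Z_n$ coincides with the unconditional uniform limit of Lemma 2, which requires controlling the conditional distribution of $S$ given $Z_n=z$ and verifying that the bounded rank-one perturbation does not survive in the limit. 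This is precisely where the qualifier ``asymptotically'' carries the argument.
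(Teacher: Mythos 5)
Your proposal follows the same overall strategy as the paper's Appendix A: both establish mutual (hence pairwise) independence of $Z_1,\dots,Z_N$ from the i.i.d.\ channel pairs, both note that full mutual independence fails because $\theta^*$ is determined by the whole collection, and both reduce the remaining claim to showing that the single excluded term $h_{\tilde n}g_{\tilde n}$ has asymptotically negligible influence on $\theta^*=\arg\bigl(S_{-\tilde n}+h_{\tilde n}g_{\tilde n}\bigr)$, using exactly the decomposition of the sum into the $(N-1)$-term remainder plus the one excluded vector. Where you genuinely diverge is in how that negligibility is justified. The paper fixes $t=10$ and $N=10^5$, forms $\bar w=w_{N-1}-t\,w_{\tilde n}$, fits its distribution numerically to a generalized extreme value law in MATLAB, and reads off $P(w_{N-1}>t\,w_{\tilde n})\approx 0.992$; this is a semi-empirical check at one operating point rather than a proof that the ratio $w_{\tilde n}/w_{N-1}$ vanishes. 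Your argument is analytic and strictly stronger: $|S_{-n}|$ is $\Theta(\sqrt{N})$ by the CLT while $|h_ng_n|$ is bounded in probability, so the angular perturbation is $O_P(1/\sqrt{N})\to 0$ for every $t$, not just $t=10$. You also explicitly flag the step both arguments leave informal --- passing from ``$\theta^*$ is close to $\arg(S_{-n})$'' to ``the conditional law of $\theta^*$ given $Z_n=z$ loses its dependence on $z$ in the limit''; the paper simply asserts this once the weight is declared negligible, so your version is no less rigorous on that point and is more honest about where the gap sits. One small caveat applies to both treatments of the negative half: being a deterministic function of $(Z_1,\dots,Z_N)$ and the weights does not by itself rule out mutual independence (a parity bit is a function of independent bits yet pairwise independent of each), so a fully rigorous disproof would exhibit a specific conditioning event under which the law of $\theta^*$ changes --- the paper's Corollary~2 proof does essentially this for the $\eta_n$, and the same device would close your argument.
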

\begin{proof}
See Appendix A.
\end{proof}
Lemma 3 shows that as the number of RIS elements increases, the dominant direction is determined by the overall contribution of all the individual directions and not by a given single direction ($Z_n$).
 
In what follows, we use the results obtained in Lemmas 1-3 to derive the activation probability in Proposition 1. Note that, activating and deactivating the $n^{th}$ RIS elements equivalent to set its amplitude reflection coefficient as $\eta_n=1$ and $\eta_n=0$, respectively. Furthermore, the activation probability in Proposition 1 is derived for the first stage of Algorithm 1 (Steps 5-10), where it is challenging to derive it for the second stage (Steps 13-19) due to the complex distribution of $\mathbf{h}\mathbf{\Gamma}\mathbf{g}$, as will be discussed later in the proof of Proposition 3. Therefore, the obtained activation probability serves as a lower bound. Also, note that the following results on the RIS elements activation process might be important for the engineering design. In particular, the knowledge of the statistics of the activated/deactivated RIS elements gives useful insight on the electromagnetic wave interference (EMI) [39] associated with the RIS as a whole and the power dissipation/harvesting in autonomous RISs [40].
\begin{proposition}\label{prop:Proposition 1}
The probability of activating the $n^{th}$ RIS element (according to Algorithm 1) can be lower-bounded as
\begin{align}
P_a&=P(\Pi(Z_n)\in R)\geq\frac{1}{2},\;\forall n,\label{eq:p_act}
\end{align}
where the lower bound is achieved as $N\rightarrow \infty$. Here, $R$ is the activation (valid) region and $\Pi(s)$ is the function given by
\begin{align}
\Pi(s)=
\begin{cases}
s,\;\;\;0\leq s\leq 2\pi\nonumber\\
s+2\pi,\;\;\;s< 0\\
s-2\pi,\;\;\;s> 2\pi.\\
\end{cases}\label{eq:pi_fun}\\
\end{align}
Here, $\Pi(s)$ maps the value of $s$ to the natural range of an angle, $[0,2\pi)$.
\end{proposition}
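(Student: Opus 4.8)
The plan is to reduce the activation event to a statement about where the single summand $e^{jZ_n}$ falls relative to the direction of the whole resultant, and then to exploit a leave-one-out decomposition together with an elementary planar-geometry fact. First I would record the marginal law of the wrapped angle $\Pi(Z_n)$: folding the triangular density of Lemma~1 onto $[0,2\pi)$ superposes $f_{Z_n}(\omega)$ and $f_{Z_n}(\omega-2\pi)$, whose sum equals the constant $\tfrac{1}{2\pi}$; equivalently, since $Z_n=\phi_n+\psi_n$ with $\phi_n\sim\mathcal{U}(-\pi,\pi)$, the reduction $Z_n\bmod 2\pi$ is uniform. Hence $\Pi(Z_n)\sim\mathcal{U}(0,2\pi)$. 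By construction of Algorithm~1 the activation region $R=[\Pi(\theta^*-\tfrac{\pi}{2}),\Pi(\theta^*+\tfrac{\pi}{2})]$ is an arc of length exactly $\pi$ centered at the dominant direction $\theta^*$, so $\{\Pi(Z_n)\in R\}$ is precisely the event that the circular angular distance between $Z_n$ and $\theta^*$ is at most $\tfrac{\pi}{2}$.

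Next I would introduce the leave-one-out partial sum $S_{-n}=\sum_{m\neq n}w_m e^{jZ_m}$, with direction $\theta^*_{-n}=\arg(S_{-n})$, so that the full resultant is $S=w_ne^{jZ_n}+S_{-n}$ and $\theta^*=\arg(S)$. Because the channels are i.i.d.\ across elements, the pair $(w_n,Z_n)$ is \emph{independent} of $S_{-n}$, and therefore $\Pi(Z_n)$ is independent of $\theta^*_{-n}$. The geometric heart of the argument is that adding $w_ne^{jZ_n}$ to $S_{-n}$ can only rotate the resultant \emph{towards} $Z_n$: as $S$ lies in the convex cone spanned by $e^{jZ_n}$ and $S_{-n}$, its argument lies between $Z_n$ and $\theta^*_{-n}$, giving the pointwise inequality that the angular distance from $Z_n$ to $\theta^*$ never exceeds that from $Z_n$ to $\theta^*_{-n}$. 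I would verify this by aligning $e^{jZ_n}$ with the real axis and checking that $\big|\arg(w_n+S_{-n})\big|\le\big|\arg(S_{-n})\big|$, a short monotonicity consequence of $w_n\ge 0$.

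With these two ingredients the bound follows. Since $\Pi(Z_n)$ is uniform and independent of the length-$\pi$ reference arc centered at $\theta^*_{-n}$, the probability that $Z_n$ lies within $\tfrac{\pi}{2}$ of $\theta^*_{-n}$ equals $\tfrac{\pi}{2\pi}=\tfrac12$ exactly. The geometric inequality then yields the event inclusion
\begin{align}
\big\{\,\text{dist}(Z_n,\theta^*_{-n})\le \tfrac{\pi}{2}\,\big\}\subseteq\big\{\,\text{dist}(Z_n,\theta^*)\le \tfrac{\pi}{2}\,\big\},\nonumber
\end{align}
so $P_a=P(\Pi(Z_n)\in R)\ge\tfrac12$. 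For the asymptotic tightness I would note that by the CLT the magnitude of $S_{-n}$ grows like $\sqrt{N}$ while $w_ne^{jZ_n}$ stays $O(1)$; hence the perturbation $\theta^*-\theta^*_{-n}\to 0$ and the two arcs coincide in the limit, forcing $P_a\to\tfrac12$ as $N\rightarrow\infty$.

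I expect the main obstacle to be making the geometric inequality fully rigorous across all configurations, in particular handling the modular wrap performed by $\Pi(\cdot)$ and the measure-zero anti-parallel case where $S_{-n}$ and $e^{jZ_n}$ point in opposite directions, and quantifying the limiting step by showing that the measure of the symmetric difference between the arcs centered at $\theta^*$ and at $\theta^*_{-n}$ vanishes in probability. It is worth remarking that this route needs only the genuine independence of $(w_n,Z_n)$ from $S_{-n}$ that the i.i.d.\ model supplies directly, so the full strength of Lemma~3 is required only to interpret the limit and not for the inequality $P_a\ge\tfrac12$ itself.
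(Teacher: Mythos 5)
Your proof is correct, and it takes a genuinely different route from the paper's. The paper's proof works asymptotically: it invokes Lemma~3 to treat $\theta^*$ and $Z_n$ as independent in the limit $N\to\infty$, evaluates $P(\Pi(Z_n)\in R)$ by explicit double integration of the triangular density of $Z_n$ against the uniform density of $\theta^*$ over the four wrap-around cases of the endpoints $\Pi(c_1),\Pi(c_2)$ (obtaining $0.25+0.25=0.5$), and then asserts that this asymptotic value is the minimum over $N$ because dependence between $\theta^*$ and $Z_n$ can only increase the activation probability. Your argument instead establishes the bound non-asymptotically: you observe that the wrapped angle $\Pi(Z_n)$ is exactly uniform on $[0,2\pi)$ (which collapses the paper's case-by-case integrations into a single arc-length computation), decompose $S=w_ne^{jZ_n}+S_{-n}$, use the genuine independence of $(w_n,Z_n)$ from $S_{-n}$ supplied by the i.i.d.\ model, and prove the pointwise inequality $\mathrm{dist}(Z_n,\theta^*)\le \mathrm{dist}(Z_n,\theta^*_{-n})$ via the convex-cone/monotonicity argument, yielding the event inclusion and hence $P_a\ge \tfrac12$ for every finite $N$. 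This buys two things the paper's proof does not deliver: it replaces the unproven monotonicity assertion (that the independent case minimizes $P_a$) with a rigorous pointwise comparison, and it avoids reliance on Lemma~3, whose own proof is partly numerical (distribution fitting). The remaining care points you flag --- the measure-zero anti-parallel configuration and quantifying $\theta^*-\theta^*_{-n}\to 0$ for the tightness claim --- are real but routine, since $\Pi(Z_n)$ has no atoms and $|S_{-n}|$ grows like $\sqrt{N}$ while $w_n$ is $O(1)$.
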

\begin{proof}
	See Appendix B.
\end{proof}
Asymptotically, Proposition 1 shows that the probability of activating or deactivating any element is equally likely when the RIS  has a large size. This result coincides with Lemma 2, where, when the RIS size is large, the activation region (semi-circle) has an equally likely probability of being in any direction. Furthermore, for a finite $N$, the activation probability is always higher than $0.5$. Finally, note that the obtained lower bound of $P_a$ is still valid under the assumption of spatial correlation between RIS elements, which can be explained as follows. If the RIS elements are spatially correlated, which is practically true, the S-RIS-D channel coefficients (2D vectors) are more concentrated in a specific direction (angle), which increases the number of vectors within the activation window, $N_a$, and thus, we obtain a higher activation probability $P_a$.
\begin{corollary}
The average number of activated RIS elements can be lower-bounded as
\begin{align}
\text{E}[N_a]\geq\frac{N}{2},\label{eq:N_a}
\end{align} 
where $\text{E}[\cdot]$ is the expectation operator and the lower bound is achieved as $N\rightarrow \infty$.
\end{corollary}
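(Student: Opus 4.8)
The plan is to exploit the fact that $N_a$ is, by construction, a counting random variable and therefore decomposes into a sum of Bernoulli indicators, one per RIS element. Concretely, I would introduce for each element the indicator $\mathbf{1}_n$, equal to $1$ if element $n$ is activated by Algorithm 1 and $0$ otherwise, so that $N_a=\sum_{n=1}^{N}\mathbf{1}_n$. The key structural point is that this representation lets me evaluate $\text{E}[N_a]$ without any knowledge of the joint law of the activation events: linearity of expectation does not require independence, which matters here since Lemma 3 guarantees only pairwise (not mutual) independence of $\theta^*$ and the $Z_n$, and the activation events are in general correlated.

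First I would apply linearity of expectation to write $\text{E}[N_a]=\sum_{n=1}^{N}\text{E}[\mathbf{1}_n]=\sum_{n=1}^{N}P_a$, where $P_a$ is the per-element activation probability. Because $h_n$ and $g_n$ are i.i.d.\ across $n$, every element shares the same marginal activation probability, so the sum collapses to $N\,P_a$. Substituting the bound $P_a\geq\tfrac{1}{2}$ established in Proposition~\ref{prop:Proposition 1} then yields $\text{E}[N_a]\geq N\cdot\tfrac{1}{2}=\tfrac{N}{2}$, which is exactly the claimed inequality \eqref{eq:N_a}.

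For the asymptotic tightness I would simply inherit it from Proposition~\ref{prop:Proposition 1}: since the lower bound $P_a\geq\tfrac{1}{2}$ is attained as $N\to\infty$, the per-element probability converges to $\tfrac{1}{2}$, and hence $\text{E}[N_a]\to N/2$ in the same limit. This keeps the corollary logically subordinate to the proposition and avoids re-deriving any probabilistic content.

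I do not anticipate a serious obstacle, as the argument is essentially a one-line application of linearity of expectation layered on top of Proposition~\ref{prop:Proposition 1}. The only point requiring genuine care is the relationship between the first-stage probability $P_a$ (which is what Proposition~\ref{prop:Proposition 1} actually bounds) and the true two-stage activation probability: because the second stage of Algorithm 1 (Steps 13--19) can only switch additional elements on and never deactivates an element already turned on in the first stage, each element's overall activation probability is at least its first-stage value. Consequently the bound $P_a\geq\tfrac12$ propagates correctly through the sum, and the inequality for $\text{E}[N_a]$ holds for the full algorithm, not merely its first stage.
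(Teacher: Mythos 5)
Your proposal is correct and follows essentially the same route as the paper: write $N_a$ as a sum of per-element activation indicators, apply linearity of expectation, and invoke the bound $P_a\geq\tfrac{1}{2}$ from Proposition~\ref{prop:Proposition 1}. Your additional remark that the second stage of Algorithm 1 can only activate further elements (so the first-stage bound on $P_a$ propagates to the full algorithm) is a nice point of care that the paper's one-line proof leaves implicit.
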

\begin{proof}
The proof follows directly from Proposition 1, where $\text{E}[N_a]=\text{E}\left[\sum_{n=1}^N\eta_n\right]=\sum_{n=1}^NP_n=NP_a\geq\frac{N}{2}$.
\end{proof}
The result provided in Corollary 1 can be explained as follows. When the number of RIS elements is relatively small, the 2D vectors associated with these elements are more likely to be concentrated within a specific beam, which increases the activation probability above $0.5$ and consequently, $\text{E}[N_a]>\frac{N}{2}$. However, as the RIS size increases, the semi-circle activation region can be in any direction (Lemma 2) and the activation probability converges to $0.5$ (Proposition 1), therefore, $\text{E}[N_a]\rightarrow \frac{N}{2}$.
\begin{corollary}
 The RVs $\eta_1, ..., \eta_n, ..., \eta_N$ are pairwise independent, but not $m$-wise independent for $2< m\leq N$.
\end{corollary}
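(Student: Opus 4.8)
The plan is to reduce every statement about the discrete activation variables $\eta_1,\dots,\eta_N$ to the continuous angles $Z_1,\dots,Z_N$ and the dominant direction $\theta^*$ already analysed in Lemmas 1--3. By construction (Steps~6--7 of Algorithm~1), the first--stage state of element $n$ is
\[
\eta_n=\mathbf{1}\big[\Pi(\theta^*-\tfrac{\pi}{2})\le \Pi(Z_n)\le \Pi(\theta^*+\tfrac{\pi}{2})\big]=\Phi(Z_n,\theta^*),
\]
i.e.\ a fixed Borel function of the pair $(Z_n,\theta^*)$ that equals $1$ exactly when $Z_n$ lies in the semicircular activation region $R$ determined by $\theta^*$. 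Hence the entire dependence structure of the $\eta_n$ is \emph{inherited} from that of the family $(Z_n,\theta^*)$, which is precisely what Lemmas 2 and 3 describe.

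First I would establish pairwise independence. Since each $\eta_n$ is Bernoulli, it suffices to show $\mathrm{Cov}(\eta_i,\eta_j)=0$, i.e.\ $\mathrm{E}[\eta_i\eta_j]=P_a^2$ for $i\neq j$. Using Lemma 1 (so that $\Pi(Z_n)$ is uniform on $[0,2\pi)$) together with Lemma 2 ($\theta^*\sim\mathcal{U}(-\pi,\pi)$) and the pairwise independences $Z_i\perp\theta^*$, $Z_j\perp\theta^*$, $Z_i\perp Z_j$ of Lemma 3, I would condition on $\theta^*$ and evaluate $\mathrm{E}[\eta_i\eta_j\mid\theta^*]=P(Z_i\in R,\,Z_j\in R\mid\theta^*)$. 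Because the influence of any single pair $(Z_i,Z_j)$ on the aggregate $\theta^*=\arg(\sum_n w_ne^{jZ_n})$ is of vanishing order as $N\to\infty$, the two angles are asymptotically conditionally independent and uniform given $\theta^*$, so each factor equals the relative length $\pi/2\pi=\tfrac12$ of a semicircle; integrating over $\theta^*$ gives $\mathrm{E}[\eta_i\eta_j]\to\tfrac14=P_a^2$, which together with $P_a\to\tfrac12$ (Proposition 1) yields independence of the pair.

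Next I would show the collection is not $m$-wise independent for any $3\le m\le N$. I would first reduce to $m=3$: if some triple $\{\eta_i,\eta_j,\eta_k\}$ is dependent, then any $m$-subset containing it is dependent as well, since mutual independence of a family forces independence of every sub-family, so failure at order three propagates to every $m\in(2,N]$. It therefore suffices to exhibit one dependent triple, and here I would exploit the structural fact, already visible in Lemma 3, that $\theta^*$ is a deterministic function of \emph{all} the $Z_n$: conditioning on several of the $Z_n$ genuinely shifts the law of $\theta^*$ and thereby couples the corresponding $\eta_n$ beyond second order. A concrete manifestation is the hard geometric constraint that the resultant points along $\theta^*$ with nonnegative projection, $\sum_n w_n\cos(Z_n-\theta^*)\ge 0$, which makes it impossible to switch every element off at once, so $P(\eta_1=\dots=\eta_N=0)=0\neq\prod_n(1-P_a)$; refining this constraint down to a fixed triple, i.e.\ showing $P(\eta_i=\eta_j=\eta_k=1)\neq P_a^3$, would complete the argument.

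The hard part will be the delicate bookkeeping that separates the second order from the third. The same collective object $\theta^*$---equivalently the phase-balance constraint $\sum_n w_n\sin(Z_n-\theta^*)=0$---links every $\eta_n$, and I must argue that its effect decorrelates pairs, leaving $\mathrm{Cov}(\eta_i,\eta_j)=0$, while persisting for triples and larger families. Making this split rigorous appears to require tracking which joint cumulants of $(Z_{i_1},\dots,Z_{i_k},\theta^*)$ survive the conditioning, for instance through a local-limit or Edgeworth expansion of the conditional law of a fixed finite subset given $\theta^*$. The clean ``not all off'' constraint settles the endpoint $m=N$ immediately, but the intermediate orders rest entirely on this finer, asymptotic dependence structure carried over from Lemma 3.
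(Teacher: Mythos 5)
Your first half is sound and essentially matches the paper's route: the paper also establishes pairwise independence by arguing that conditioning on $\eta_n=1$ (i.e., $Z_n$ lying in the semicircular activation region) does not restrict where that region can sit on the circle, so $P(\eta_{\tilde{n}}=1\mid\eta_n=1)=P(\eta_{\tilde{n}}=1)$; your covariance computation conditioned on $\theta^*$ is the same argument in probabilistic rather than geometric language (both versions quietly upgrade the pairwise independence of Lemma 3 to asymptotic conditional independence of $(Z_i,Z_j)$ given $\theta^*$, so you are no less rigorous than the paper here). Your reduction of the negative claim to exhibiting a single dependent triple is also correct, and your observation that $P(\eta_1=\cdots=\eta_N=0)=0\neq(1-P_a)^N$ because the resultant necessarily has positive projection on its own direction is a clean, rigorous fact the paper does not state.

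The genuine gap is that you never produce the dependent triple, and without it the claim for $3\le m<N$ is unproven: failure of mutual independence of the full family does not propagate \emph{downward} to smaller subfamilies, so the ``not all off'' constraint settles only $m=N$. You defer the crux to ``tracking joint cumulants'' or an Edgeworth expansion, but no such machinery is needed. The paper closes this step with one geometric observation: conditioning on $\eta_1=\eta_2=1$ forces the semicircular activation region to contain both $\Pi(Z_1)$ and $\Pi(Z_2)$, hence the arc between them, so the admissible positions of the endpoints $\Pi(c_1),\Pi(c_2)$ (equivalently of $\theta^*$) no longer sweep the whole circle uniformly; the conditional probability that a third angle $\Pi(Z_{\tilde{n}})$ falls in the region therefore differs from $P_a$, giving $P(\eta_{\tilde{n}}=1\mid\eta_1=\eta_2=1)\neq P(\eta_{\tilde{n}}=1)$. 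This is exactly the asymmetry you were looking for between second and third order: one conditioning event leaves the region unconstrained, two conditioning events pin it down. You should replace the proposed asymptotic expansion with this elementary argument.
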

\begin{proof}
We show that $\eta_1, ..., \eta_n, ..., \eta_N$ are pairwise independent RVs by showing that their probability mass functions (PMFs), for any given pair of elements $n$ and $\tilde{n}$, are independent, $P(\eta_{\tilde{n}}=1|\eta_n=1)=P(\eta_{\tilde{n}}=1)$, where $n, \tilde{n}\in\{1, ..., N\}$, and $n\neq\tilde{n}$. Note that the activation probability of the $n^{th}$ element is the PMF of $\eta_n$.

Consider the activation region $R$ specified by the endpoints $[\Pi(c_1),\Pi(c_2)]$, where, without loss of generality, we have $\Pi(c_1)\leq\Pi(c_2)$, as described in the proof of Proposition 1. Therefore, the activation of the $n^{th}$ element ($\eta_n=1$) corresponds to $\Pi(c_1)\leq \Pi(Z_n)\leq\Pi(c_2)$, however; as it can be seen from the proof of Proposition 1 (Fig. \ref{Fig:Angles}), giving this information does not change the activation region $R$ for the $\tilde{n}^{th}$ element as $\Pi(c_1)$ and $\Pi(c_2)$ are separated by $\pi/2$ and, therefore, they still span the entire circle. This shows that the activation probabilities (PMF of $\eta_n$) for any pair of elements $n$ and $\tilde{n}$ are independent and we have that $\eta_n$ and $\eta_{\tilde{n}}$ are independent RVs, $\forall n, \tilde{n}\in\{1, ..., N\}$. However,
we show that these RVs are not $m$-wise independent for $2<m\leq N$, as follows. Consider, without loss of generality, the non-zero probability event where two elements are activated, for example $\eta_1=\eta_2=1$, with $\Pi(Z_1)\leq\Pi(Z_2)$. Consequently, it can be readily seen that given $\eta_1=\eta_2=1$ changes the activation region $R$ for the $\tilde{n}^{th}$ element as the endpoints $\Pi(c_1)$ and $\Pi(c_2)$ do not span the entire circle anymore; more specifically, the endpoints cannot span the interval $[\Pi(Z_1),\Pi(Z_2)]$. This shows that $P(\eta_{\tilde{n}}=1|\eta_1=\eta_2=1)\neq P(\eta_{\tilde{n}}=1)$, thus,  $\eta_1,\eta_2$, and $\eta_{\tilde{n}}$ are non-independent RVs. Consequently, the RVs $\eta_1,..., \eta_N$ are not triple-wise independent, thus, not $m$-wise independent for $2<m\leq N$.
\vspace{-0.2cm}
\end{proof}
Corollary 2 shows that the activation probabilities of any two given elements are independent. However, the activation probabilities for $N>2$ elements are dependent. The result given in Corollary 2 is necessary for the next proposition, where we propose a new physical resources-based outage probability (ROP) metric. The ROP metric is defined to be the probability that the number of activated RIS elements at each transmission is less than a predefined threshold ($N_{thr}$), $\text{ROP}=P(N_a\leq N_{thr})$.
\begin{proposition}
For a given $N_{thr}$, we have $\text{ROP}_L\leq\text{ROP}\leq\text{ROP}_U$, where the lower and upper bounds can be given, respectively, as
\begin{align}
\text{ROP}_L&=1-U,\label{eq:ROP_U}\\
\text{where,}\nonumber\\
U&=
\begin{cases}
	1,\;\;\;N_{thr}<\bar{N}P_a\\
(\bar{N}\bar{P_a}+N_{thr})P_a/N_{thr},\;\bar{N}P_a\leq N_{thr}\leq 1+\bar{N}P_a\\
	\frac{N\bar{N}P_a^2+(\zeta-1)(\zeta-2NP_a)}{(N_{thr}-\zeta)^2+(N_{thr}-\zeta)},\;\;\;N_{thr}\geq 1+\bar{N}P_a\\
	 \zeta=\lceil\frac{NP_a(N_{thr}-1-\bar{N}P_a)}{N_{thr}-NP_a}\rceil,
\end{cases}\\
	\text{ROP}_U&=\sum_{i=0}^{N_{thr}} \binom{N}{i}P_a^i(1-P_a)^{N-i},\label{eq:bion-cdf}
 \end{align}
where $\bar{N}=N-1$ , $\bar{P}_a=1-P_a$, $\lceil\cdot\rceil$ denotes the ceiling function, and $U$ corresponds to the upper bound probability of $N_a$ exceeding $N_{thr}$.
\end{proposition}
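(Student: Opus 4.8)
The plan is to treat the activation count $N_a=\sum_{n=1}^N \eta_n$ as a sum of Bernoulli$(P_a)$ indicators and to attack the two bounds with complementary tools: a two-moment tail inequality for the lower bound $\text{ROP}_L$, and an independence-domination argument for the upper bound $\text{ROP}_U$. I would first record the only two facts about the joint law that the formulas actually need. From Proposition 1 each $\eta_n$ is Bernoulli with $P(\eta_n=1)=P_a$, so $\text{E}[N_a]=NP_a$; from the pairwise independence in Corollary 2, $\text{E}[\eta_n\eta_m]=P_a^2$ for $n\neq m$, whence $\text{E}[N_a^2]=NP_a+N(N-1)P_a^2=NP_a+N\bar NP_a^2$ and $\text{Var}(N_a)=NP_a\bar P_a$ (matching the binomial). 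These are exactly the quantities that appear inside $U$, so the lower bound is a pure two-moment bound.

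For $\text{ROP}_L=1-U$ it suffices to upper-bound the exceedance $P(N_a\ge N_{thr})\ (\ge P(N_a>N_{thr}))$. Since $N_a$ is integer-valued, the key step is to majorize the tail indicator on the integers by the quadratic $q(x)=\frac{(x-\zeta)(x-\zeta+1)}{(N_{thr}-\zeta)(N_{thr}-\zeta+1)}$ for an integer $\zeta\le N_{thr}-1$: because $(x-\zeta)(x-\zeta+1)$ is a product of two consecutive integers it is non-negative for every integer $x$, so $q(x)\ge 0=\mathbf{1}[x\ge N_{thr}]$ when $x<N_{thr}$, while monotonicity of $t\mapsto t(t+1)$ on $t\ge 0$ gives $q(x)\ge 1$ when $x\ge N_{thr}$. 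Taking expectations yields $P(N_a\ge N_{thr})\le \text{E}[q(N_a)]=U$, and $\text{E}[q(N_a)]$ depends only on $\text{E}[N_a]$ and $\text{E}[N_a^2]$; substituting them reproduces the case-3 expression for $U$. I would then optimize the free integer $\zeta$: minimizing the continuous version of $\text{E}[q(N_a)]$ gives $j^\star=\frac{NP_a(N_{thr}-1-\bar NP_a)}{N_{thr}-NP_a}$, and rounding up yields $\zeta=\lceil j^\star\rceil$. The three branches of $U$ are the regimes of this optimization: for $N_{thr}<\bar NP_a$ the threshold sits at or below the mean and the bound degenerates to the trivial $U=1$; the window $\bar NP_a\le N_{thr}\le 1+\bar NP_a$ is the transitional regime handled by a boundary choice of the majorant; and $N_{thr}\ge 1+\bar NP_a$ is the genuine tail regime where $\zeta=\lceil j^\star\rceil$ is interior. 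I would finish by checking the branches agree at the breakpoints (for instance both the middle and last branch give $NP_a/(1+\bar NP_a)$ at $N_{thr}=1+\bar NP_a$).

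For the upper bound I would compare $N_a$ against the fully independent surrogate $B\sim\text{Bin}(N,P_a)$, whose lower-tail CDF is exactly $\text{ROP}_U$, and argue that the independent case is the worst case for the event $\{N_a\le N_{thr}\}$, i.e. $P(N_a\le N_{thr})\le P(B\le N_{thr})=\text{ROP}_U$. The supporting intuition is that the geometry of Algorithm 1 makes the activations negatively dependent: as shown in the proof of Corollary 2, conditioning on two elements being active shrinks the admissible activation arc for any third element, so the selection rule tends to spread the mass of $N_a$ rather than clump it into the lower tail. The hard part lives here. Pairwise independence alone does \emph{not} imply a pointwise domination of CDFs; it only pins the first two moments, and convex-order reasoning with equal variances cannot compare lower tails at a fixed point. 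Making this step rigorous requires upgrading the triple-wise negative dependence of Corollary 2 to a genuine negative-association (or negative-lower-orthant-dependence) property of $\{\eta_n\}$ and then invoking the corresponding lower-tail comparison with the independent binomial. Establishing that association property directly from the semicircle selection rule is the main obstacle, and is where I would concentrate the effort.
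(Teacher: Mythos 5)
Your treatment of $\text{ROP}_L$ is correct but takes a more self-contained route than the paper: the paper does not derive the tail bound at all, it simply observes that $N_a$ is a sum of pairwise independent Bernoulli variables and cites \cite[Theorem 4.1]{N_a_upper} for $U$. Your reconstruction --- majorizing the integer-valued tail indicator by the quadratic $q(x)=\frac{(x-\zeta)(x-\zeta+1)}{(N_{thr}-\zeta)(N_{thr}-\zeta+1)}$, noting that only $\text{E}[N_a]=NP_a$ and $\text{E}[N_a^2]=NP_a+N\bar NP_a^2$ (available from pairwise independence) enter, and then optimizing the integer $\zeta$ --- is exactly the mechanism behind the cited theorem, and your algebra does reproduce the third branch of $U$ and the continuity of the branches at $N_{thr}=1+\bar NP_a$. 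What your version buys is verifiability: the reader can check the formula rather than trust a citation. The one loose end is the middle branch, which you attribute to ``a boundary choice of the majorant'' without deriving it; as stated it does not coincide with $\zeta=0$ or $\zeta=1$ in your family except at the endpoint, so that regime still implicitly leans on the cited theorem.

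On $\text{ROP}_U$ you have correctly diagnosed the weak point, and you should know that the paper's own proof is precisely the heuristic you warn against: it invokes the fact that a sum of pairwise independent Bernoullis can be ``approximated'' by the binomial with total variation distance $1-1/e$ (a constant that is not small and in any case yields no one-sided bound), adds the informal remark that mutual independence corresponds to the minimum activation probability, and from this asserts that the binomial CDF dominates $P(N_a\le N_{thr})$. That is a stochastic-dominance claim, and, as you say, pairwise independence fixes only two moments and cannot deliver a pointwise CDF comparison; closing it would require a negative-association or negative-lower-orthant-dependence property of $\{\eta_n\}$ derived from the semicircle selection rule, which neither you nor the paper establishes. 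So the gap you flag is genuine, but it is a gap in the paper's argument as much as in yours; your proposal is no less rigorous than the published proof, and is more honest about where the difficulty sits.
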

\begin{proof}
Note that, from Corollary 2, $N_a=\sum_{n=1}^N\eta_n$ corresponds to a sum of pairwise independent Bernoulli RVs. Therefore, for $\text{ROP}_L$, $U$ corresponds to the probability of the sum of pairwise independent Bernoulli RVs ($N_a$) to exceed an integer threshold ($N_{thr}$), where the proof for $U$ is given in \cite[Theorem 4.1]{N_a_upper}.  

The upper bound $\text{ROP}_U$ is found by approximating $N_a$ to the Binomial distribution, as follows. Note that the sum of pairwise independent Bernoulli RVs can be approximated to the mutually independent case (Binomial distribution) with a total variation distance of $1-1/e$, as $N\rightarrow \infty$ \cite{binom-aprox}. Furthermore, it is worth noting that the approximation to the mutual independence case corresponds to the minimum activation probability, where the dominant direction $\theta^*$ is not a function of the individual directions $Z_n$ anymore. In light of these, the approximation of $N_a$ to the Binomial distribution serves as a lower bound for $N_a$ and, therefore, its CDF given in \eqref{eq:bion-cdf} serves as the upper bound for ROP.  
\end{proof}
In order to get useful insight from Proposition 2, we use Hoeffding's inequality (in light of (\ref{eq:bion-cdf})), $P(N_a-\text{E}[N_a]\geq c)\leq e^{-2\frac{c^2}{N}}$, for $c\geq0$, which shows that the number of activated RIS elements is concentrated around its mean value ($N/2$) and thus, ROP decreases as $N_{thr}$ decreases below the mean value of $N_a$.
\begin{proposition}
For a targeted transmission rate $r$, the outage probability $P_{out}$ of $D$ is given by
\begin{align}
P_{out}=\frac{1}{2}\left[1+\text{erf}\left(\frac{\ln(\bar{r})-\mu}{\sqrt{2}\bar{\sigma}}\right)\right],\label{eq:log-normal-cdf}
\end{align}
where $\bar{r}=\frac{2^r-1}{L\rho}$, $\mu=a_1N^{b_1}+c_1$, $\bar{\sigma}=a_2N^{b_2}+c_2$, and $\text{erf}(\cdot)$ is the error function. Considering the spatial correlation between RIS elements with inter-element separation of $\frac{\lambda}{8}$, we have $(a_1,a_2)=(-533.1,2.928), (b_1,b_2)\hspace{-0.1cm}=\hspace{-0.1cm}(-0.003336,-0.1783)$, and $(c_1,c_2)=(532.3,-0.6076)$. On the other side, ignoring spatial correlation due to inter-element separation much larger than $\frac{\lambda}{2}$ \cite{spatial-corr}, we have $(a_1,a_2)=( 39.59,1.725), (b_1,b_2)=(0.03871,-0.3917)$, and $(c_1,c_2)=(-40.54,-0.0354)$.
\end{proposition}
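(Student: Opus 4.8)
The plan is to recognize \eqref{eq:log-normal-cdf} as the cumulative distribution function (CDF) of a log-normal random variable and to identify that variable with $\bar H$. First I would start from the definition of the outage event, $P_{out}=P\big(\log_2(1+\text{SNR})<r\big)$, and substitute the SNR expression in \eqref{eq:SNR}. Since $\text{SNR}=L\rho\bar H$ is a monotone increasing function of $\bar H$, the outage event is equivalent to $\{\bar H<\bar r\}$ with $\bar r=(2^r-1)/(L\rho)$, so that $P_{out}=F_{\bar H}(\bar r)$ is simply the CDF of $\bar H$ evaluated at the threshold $\bar r$. This reduces the proposition to establishing the distribution of $\bar H=\big|\sum_{n^*=1}^{N_a}h_{n^*}g_{n^*}\big|^2$.

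Next I would argue that $\bar H$ is (approximately) log-normal. Writing $S=\sum_{n^*=1}^{N_a}w_{n^*}e^{jZ_{n^*}}$ with $w_{n^*}=|h_{n^*}g_{n^*}|$, Algorithm 1 retains only phasors whose angles $Z_{n^*}$ lie in a semicircle about the dominant direction $\theta^*$, so the retained terms combine coherently and $S$ possesses a nonzero resultant. Invoking the central limit theorem as $N_a\to\infty$ (justified by Corollary 1, which gives $\text{E}[N_a]\ge N/2$), $S$ converges to a complex Gaussian with nonzero mean, so $\bar H=|S|^2$ is a noncentral chi-square (Rician-power) variable; in the operating regime of interest this is well matched by a log-normal law, which I would confirm by moment matching (or a goodness-of-fit check) on $\ln\bar H$. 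Granting $\ln\bar H\sim\mathcal{N}(\mu,\bar\sigma^2)$, the CDF of $\bar H$ is $F_{\bar H}(x)=\tfrac12\big[1+\text{erf}\big((\ln x-\mu)/(\sqrt2\,\bar\sigma)\big)\big]$, and substituting $x=\bar r$ yields \eqref{eq:log-normal-cdf} directly.

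It then remains to specify the parameters $\mu=\text{E}[\ln\bar H]$ and $\bar\sigma^2=\text{Var}[\ln\bar H]$ as functions of $N$. Because $\bar H$ depends on the data-dependent activated set and its terms are only pairwise (not mutually) independent by Corollary 2, and because spatial correlation further alters both $N_a$ and the amplitude statistics, these moments admit no clean closed form. I would therefore determine them empirically: simulate $\bar H$ over a range of $N$, compute the sample mean and standard deviation of $\ln\bar H$, and fit them to the power-law models $\mu=a_1N^{b_1}+c_1$ and $\bar\sigma=a_2N^{b_2}+c_2$, separately for the $\lambda/8$-correlated and the effectively uncorrelated $\gg\lambda/2$ cases, producing the two reported coefficient sets.

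The hard part is the log-normal step. The exact law of $\bar H$ is genuinely intractable---it is the squared magnitude of a sum of products of Gaussians restricted to a random, correlation-dependent index set---so the log-normal identification is an approximation rather than an exact derivation, and its validity (together with the $N$-dependence of $\mu$ and $\bar\sigma$) must be anchored by simulation and curve fitting rather than proved in closed form.
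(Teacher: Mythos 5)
Your proposal is correct and follows essentially the same route as the paper: reduce $P_{out}$ to $F_{\bar H}\!\left(\bar r\right)$ via the monotonicity of $\log_2(1+L\rho\bar H)$, observe that the exact law of $\bar H$ is intractable because its terms are non-independent products of Gaussians restricted to the activated set, identify $\bar H$ as log-normal by empirical distribution fitting, and obtain $\mu$ and $\bar\sigma$ as power-law fits in $N$ for the correlated and uncorrelated cases. Your added CLT/noncentral-chi-square heuristic for why a log-normal fit is plausible is a minor embellishment the paper does not include, but the load-bearing steps are identical.
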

\begin{proof}
Considering a targeted transmission rate $r$ then, the outage probability is given as
\begin{align}
	P_{out}&=P(\log_2(1+\text{SNR})<r ),
\end{align}
from \eqref{eq:SNR}, we obtain
\begin{align}
	P_{out}&=P\left(\bar{H}<\frac{2^r-1}{L\rho}\right)\;\nonumber\\
	&=F_{\bar{H}}\left(\frac{2^r-1}{L\rho}\right),\label{eq:OP}
\end{align}
where $F_{\bar{H}}$ is the CDF of $\bar{H}$.
\begin{figure}[t]
	\centering  
	\subfloat[]{\label{fig2:a}\includegraphics[height=38mm, width=44mm]{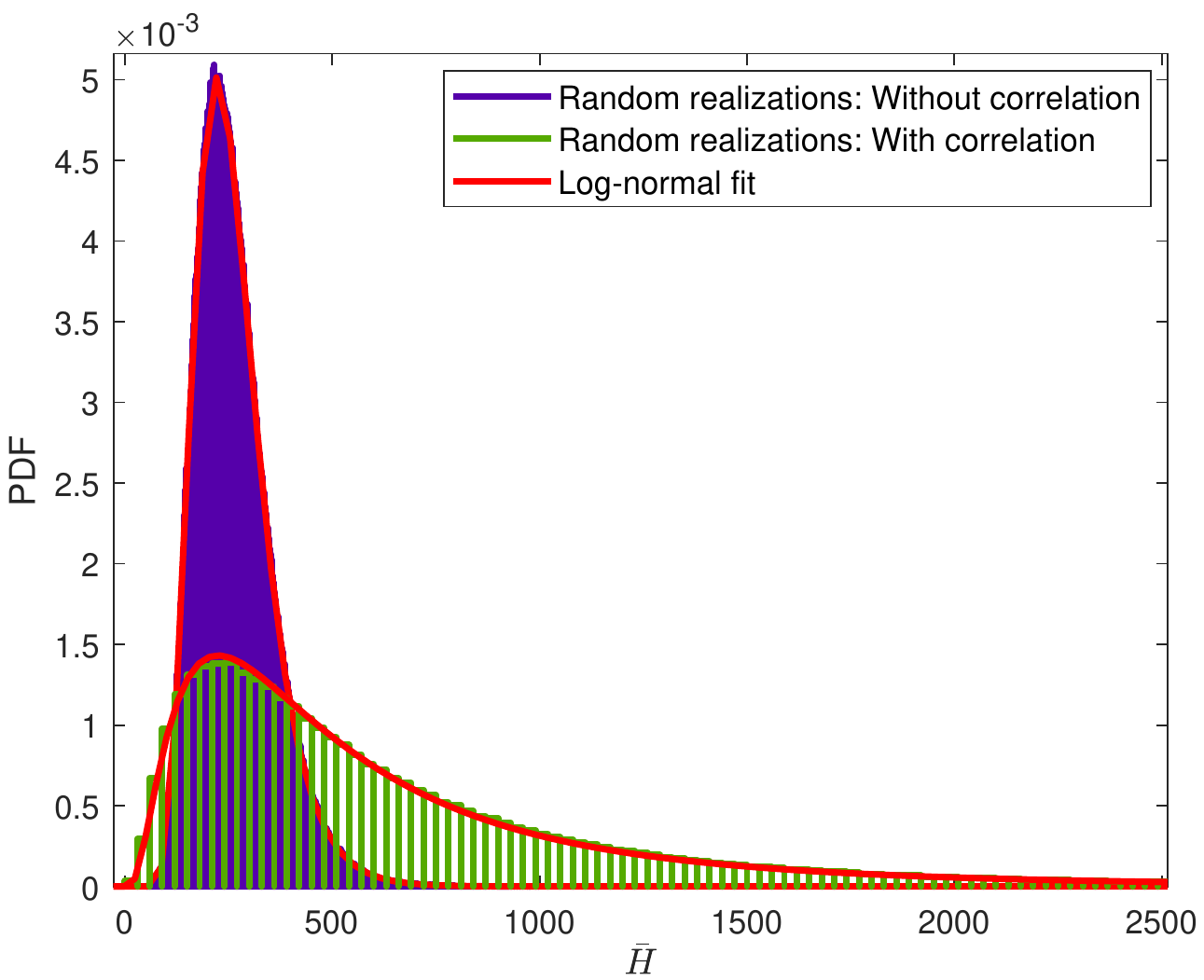}}
	\subfloat[]{\label{fig2:b}\includegraphics[height=38mm, width=45mm]{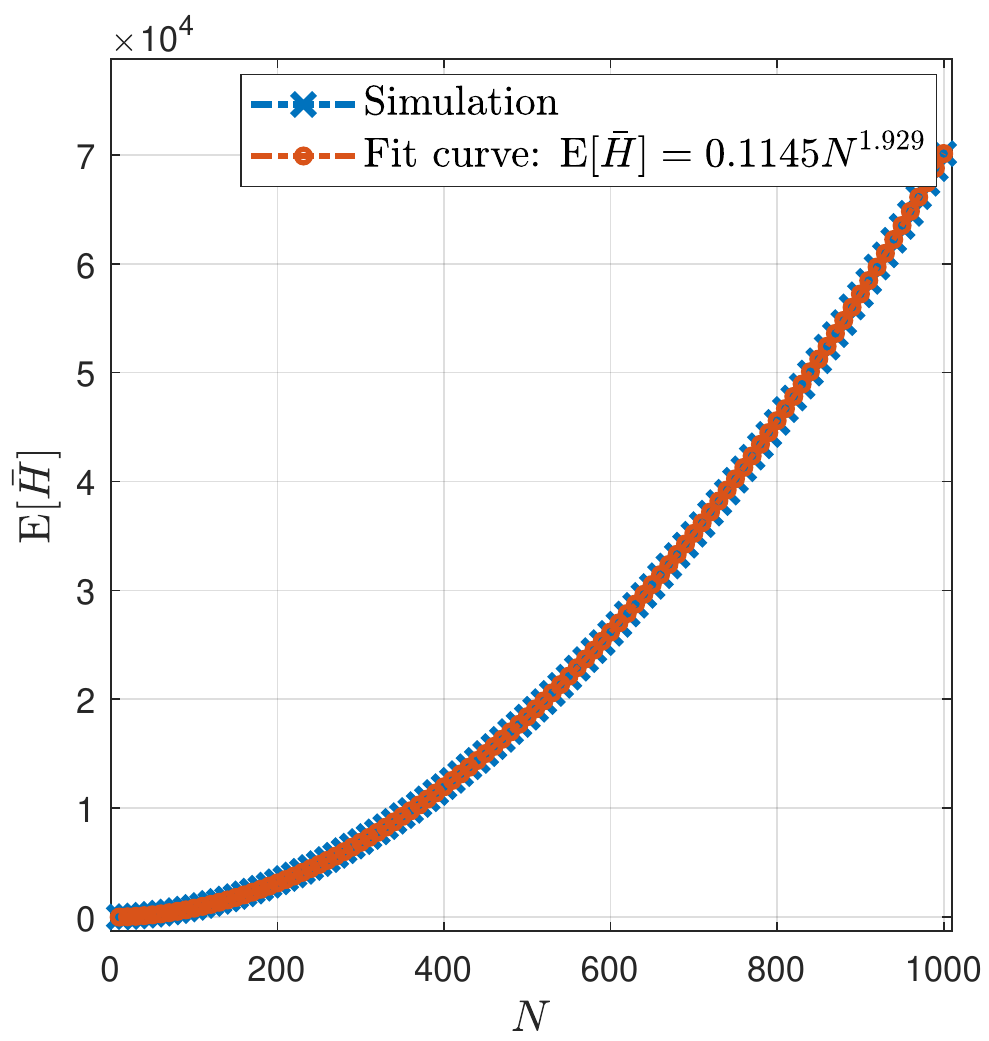}}
	\caption{(a) Fitting the distribution of $\bar{H}$ to the log-normal distribution and (b) the quadratic scaling law of the channel gain $\bar{H}$.}
	\label{fig:log-normal-fit}
	\vspace{-0.4cm}
\end{figure}
Note that the S-RIS-D channel coefficients $h_{n^*}g_{n^*}$, $n^*\in\{1, ..., N_a\}$ are correlated through their phases $\Pi(Z_{n^*})$, where $\smash{\underset{n^*}{\text{max}}}(\Pi(Z_{n^*}))\leq\text{max}(\Pi(c_1),\Pi(c_2))$ and $\smash{\underset{n^*}{\text{min}}}(\Pi(Z_{n^*}))\geq\text{min}(\Pi(c_1),\Pi(c_2))$, as demonstrated in the proof of Proposition 1. Accordingly, the terms of the sum of $\bar{H}$ are non-independent, where each term is a product of two Gaussian RVs. Consequently, it is challenging to analytically derive the distribution of $\bar{H}$. Therefore, we use the Distribution Fitting Tool in MATLAB\footnote{Note that this tool uses a maximum likelihood estimator to estimate the assumed distribution's parameters, and the quality of the distribution fit can be measured by the standard error for the estimated parameters and the shape match of the PDFs.} to fit the distribution of $\bar{H}$ and, accordingly, obtain the outage probability, which is the CDF of $\bar{H}$. Fig. \ref{fig:log-normal-fit}(a) shows that the distribution of $\bar{H}$, with/without spatial correlation, perfectly matches the one of the log-normal RV with $\mu$ and $\bar{\sigma}$ are the mean and standard deviation of $\ln(\bar{H})$, respectively, $\ln(\bar{H})\sim\mathcal{N}(\mu,\bar{\sigma}^2)$. Consequently, we obtain the outage probability as the CDF of $\bar{H}$, as given in \eqref{eq:log-normal-cdf}. Furthermore, using the Curve Fitter tool in MATLAB, a semi-analytical representation is obtained for $\mu$ and $\bar{\sigma}$ as functions of $N\in\{10,500\}$, where the fitting parameters are given after (\ref{eq:log-normal-cdf}).
\end{proof}
Proposition 3 shows that the outage probability decreases as $\rho$ and/or $N$ increases. In particular, it can be readily verified that, for $N\in\{10,500\}$, we have $\mu, \bar{\sigma}>0$, regardless of the spatial correlation assumption. Thus, considering the asymptotic behavior with respect to $\rho$, as $\rho\rightarrow\infty$ ($\bar{r}\rightarrow 0$), we obtain $\text{erf}(\frac{\ln(\bar{r})-\mu}{\sqrt{2}\bar{\sigma}})\rightarrow -1$ and $P_{out}\rightarrow 0$. Likewise, considering the asymptotic behavior with respect to $N$, note that as $N\rightarrow 500$ we have $\mu\rightarrow 10$, $\bar{\sigma}\rightarrow 0.36$, accordingly, it can be readily shown that for $r<\log_2(L\rho e^{8.5}+1)$, we have $\text{erf}(\frac{\ln(\bar{r})-\mu}{\sqrt{2}\bar{\sigma}})\rightarrow -1$, and $P_{out}\rightarrow 0$.	
\begin{proposition}
The ergodic rate of $D$ can be upper-bounded as
\begin{align}
R\leq\log_2(1+L\rho\exp(\mu+\frac{\bar{\sigma}^2}{2})).
\end{align}
\end{proposition}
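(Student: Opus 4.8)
The plan is to express the ergodic rate directly as an expectation and then exploit the concavity of the rate function together with the log-normal characterization of $\bar{H}$ already established in Proposition 3. Starting from the definition $R=\text{E}[\log_2(1+\text{SNR})]$ and substituting the instantaneous SNR from \eqref{eq:SNR}, I obtain $R=\text{E}[\log_2(1+L\rho\bar{H})]$, so the entire problem reduces to bounding the expectation of a concave function of $\bar{H}$.

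First I would invoke Jensen's inequality. Since $L\rho>0$ and $1+L\rho x$ is affine and increasing in $x$, while $\log_2(\cdot)$ is concave and increasing, the composite map $x\mapsto\log_2(1+L\rho x)$ is concave on $[0,\infty)$. Jensen's inequality then permits moving the expectation inside the logarithm, yielding $R\leq\log_2(1+L\rho\,\text{E}[\bar{H}])$.

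Second, I would evaluate $\text{E}[\bar{H}]$ using the distributional result from Proposition 3, namely $\ln(\bar{H})\sim\mathcal{N}(\mu,\bar{\sigma}^2)$. The mean of a log-normal random variable coincides with the moment generating function of the underlying Gaussian evaluated at unity, so $\text{E}[\bar{H}]=\exp(\mu+\bar{\sigma}^2/2)$. Substituting this closed-form mean into the Jensen bound produces exactly $R\leq\log_2(1+L\rho\exp(\mu+\bar{\sigma}^2/2))$, which is the claimed inequality.

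The step I expect to carry the real weight is not the inequality itself---Jensen combined with the log-normal mean is routine---but rather that the bound inherits the empirical log-normal fit of Proposition 3 instead of a distribution derived in closed form. The tightness of the bound hinges on how concentrated $\bar{H}$ is about its mean: for the small $\bar{\sigma}$ values reported in the fitted parameters the gap introduced by Jensen's inequality is slight, which is why the contributions describe it as a \emph{tight} upper bound. Quantifying this tightness would require the simulation comparison rather than any further analytic step.
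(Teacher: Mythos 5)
Your proof is correct and follows essentially the same route as the paper: apply Jensen's inequality to the concave map $x\mapsto\log_2(1+L\rho x)$ and then substitute the log-normal mean $\text{E}[\bar{H}]=\exp(\mu+\bar{\sigma}^2/2)$ from the fit in Proposition 3. Your added remark on why the composite function is concave and on the source of the bound's tightness is consistent with, and slightly more explicit than, the paper's argument.
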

\begin{proof}
The ergodic rate of $D$ is given by
\begin{align}
R&=\text{E}[\log_2(1+\text{SNR})],\label{eq:rate}
\end{align}
by considering the concavity of the function in (\ref{eq:rate}), we use Jensen's inequality to obtain
\begin{align}
R&\leq\log_2(1+\text{E}[\text{SNR}]),\nonumber\\
&=\log_2(1+L\rho\exp(\mu+\frac{\bar{\sigma}^2}{2})),\label{eq:E_rate}
\end{align}
where, from (\ref{eq:SNR}), we have $\text{E}[\text{SNR}]=L\rho\text{E}[\bar{H}]$, furthermore, from the proof of Proposition 3, we have $\bar{H}$ has log-normal distribution with a mean $\text{E}[\bar{H}]=\exp(\mu+\frac{\bar{\sigma}^2}{2})$. Here, we consider the same semi-analytical representation for $\mu$ and $\bar{\sigma}$ given after (\ref{eq:log-normal-cdf}).
\end{proof}
Note that, from the semi-analytical representation for $\mu$ and $\bar{\sigma}$ given in the proof of Proposition 3, it can be readily verified that $\mu,\bar{\sigma}>0$ and the sum $\mu+\frac{\bar{\sigma}^2}{2}$ increases with $N$, regardless of the spatial correlation assumption. This shows that the ergodic rate is an increasing function in $N$. 
\begin{proposition}
An instantaneous SNR in order of $\mathcal{O}(N^2)$ can be achieved by Algorithm 1 with a complexity level of $\mathcal{O}(N)$.
\end{proposition}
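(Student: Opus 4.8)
The plan is to establish the two halves of the claim separately: first that the channel gain $\bar{H}$ (and hence $\text{SNR}=L\rho\bar{H}$) grows on the order of $N^2$, and second that Algorithm 1 attains this with $\mathcal{O}(N)$ operations. For the scaling law, I would start from $\bar{H}=\left|\sum_{n^*=1}^{N_a}h_{n^*}g_{n^*}\right|^2$ and write each activated term in polar form $h_{n^*}g_{n^*}=w_{n^*}e^{jZ_{n^*}}$ with $w_{n^*}=|h_{n^*}g_{n^*}|$. The magnitude of any complex sum is bounded below by its projection onto the dominant direction, so
\begin{align}
\sqrt{\bar{H}}=\left|\sum_{n^*=1}^{N_a}w_{n^*}e^{jZ_{n^*}}\right|\geq \sum_{n^*=1}^{N_a}w_{n^*}\cos(Z_{n^*}-\theta^*).\nonumber
\end{align}
By construction (Steps 5--10), every activated element satisfies $|Z_{n^*}-\theta^*|\leq\pi/2$, so each cosine term is non-negative; moreover the second stage (Steps 13--19) only appends elements that increase $|\mathbf{h}\mathbf{\Gamma}\mathbf{g}|$, so it can only strengthen this bound. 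This reduces the lower-bound argument to showing that the right-hand side is $\Theta(N)$.

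For the latter, I would exploit that for circularly symmetric Gaussian $h_n,g_n$ the amplitude $w_n=|h_n||g_n|$ is independent of the phase $Z_n=\arg(h_n)+\arg(g_n)$, with $\text{E}[w_n]=\text{E}[|h_n|]\,\text{E}[|g_n|]=\pi/4$. Conditioning on $\theta^*$ and invoking the asymptotic pairwise independence of $\theta^*,Z_1,\dots,Z_N$ from Lemma 3, the activated terms $w_{n^*}\cos(Z_{n^*}-\theta^*)$ become conditionally i.i.d. with a strictly positive mean $\mu_0>0$ (the cosine is non-negative over the activation window and positive on its interior, and the amplitude factor contributes the constant $\pi/4$). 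Since Corollary 1 gives $\text{E}[N_a]\geq N/2$, the sum concentrates around $N_a\mu_0=\Theta(N)$ by the law of large numbers, whence $\sqrt{\bar{H}}=\Omega(N)$. The matching upper bound is immediate, $\sqrt{\bar{H}}\leq\sum_{n=1}^{N}w_n=\mathcal{O}(N)$, so that $\bar{H}=\Theta(N^2)$ and $\text{SNR}=L\rho\bar{H}$ is of order $\mathcal{O}(N^2)$, the same quadratic law as the classical phase shift-based PB.

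For the complexity I would simply tally the cost of each stage of Algorithm 1. Computing $\theta^*=\arg(\sum_{n=1}^{N}h_ng_n)$ in Step 2 is a single pass, $\mathcal{O}(N)$; the first loop (Steps 5--10) evaluates $Z_n$ and the membership test $\Pi(c_1)\leq\Pi(Z_n)\leq\Pi(c_2)$ once per element, again $\mathcal{O}(N)$. The only delicate point is the second loop (Steps 13--19): a literal recomputation of $\mathbf{h}\mathbf{\Gamma}\mathbf{g}$ at every iteration would cost $\mathcal{O}(N)$ each and give $\mathcal{O}(N^2)$ overall. I would instead keep the running partial sum $s=\mathbf{h}\mathbf{\Gamma}\mathbf{g}$ in memory and, at each step, compare $|s+h_ng_n|$ against $|s|$, updating $s$ only upon acceptance; each iteration is then $\mathcal{O}(1)$ and the loop is $\mathcal{O}(N)$. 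Summing the stages yields total complexity $\mathcal{O}(N)$.

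The main obstacle is making the $\Omega(N)$ lower bound rigorous, since the activated phases $Z_{n^*}$ are correlated through the data-dependent direction $\theta^*$ and the random count $N_a$ is itself coupled to those phases. The clean way around this is to condition on $\theta^*$ and lean on Lemma 3 together with the amplitude--phase independence of $w_n$ and $Z_n$: conditionally, activation of each element depends only on its own $Z_n$, the summands are i.i.d. with positive mean, and a standard concentration argument applies; the unconditional bound then follows by averaging over $\theta^*$ (whose asymptotic uniformity is guaranteed by Lemma 2).
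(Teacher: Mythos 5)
Your proof is correct and lands on the same geometric pivot as the paper's, but the two arguments for the $\mathcal{O}(N^2)$ half are genuinely different in character. Both exploit the fact that every first-stage activated vector lies within $\pm\pi/2$ of the dominant direction, so its projection onto $\theta^*$ is non-negative; the paper then proceeds \emph{deterministically}, constructing a worst-case configuration (activated vectors grouped into pairs with separation $\pi-\delta$, symmetric about the dominant direction) and observing that even there the same-sign real parts accumulate linearly in $N_a$, which combined with Corollary 1 gives the quadratic law. You instead argue \emph{probabilistically}: the projection bound $\sqrt{\bar{H}}\geq\sum_{n^*}w_{n^*}\cos(Z_{n^*}-\theta^*)$, amplitude--phase independence giving $\mathrm{E}[w_n]=\pi/4$, Lemma 3 to decouple $Z_{n^*}$ from $\theta^*$, and a law-of-large-numbers step. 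Your route buys an explicit positive constant in the lower bound and supplies the matching $\sqrt{\bar{H}}\leq\sum_n w_n=\mathcal{O}(N)$ upper bound, which the paper never states; the paper's route needs no concentration argument and is immune to the residual dependence among activated phases. One caveat on your side: the summands are not \emph{conditionally i.i.d.}\ given $\theta^*$ --- conditioning on $\theta^*$ couples the $Z_n$, and Corollary 2 shows the activations are only pairwise (not mutually) independent --- but pairwise independence with bounded variance still yields Chebyshev-type concentration, and you flag exactly this obstacle, so the gap is repairable. Also note that your projection inequality should be applied to the stage-1 set only, with the stage-2 monotonicity of $|\mathbf{h}\mathbf{\Gamma}\mathbf{g}|$ invoked afterward (as you do), since stage-2 elements can have negative cosines. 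On complexity, your accounting is actually more careful than the paper's: the paper only remarks that the loops total at most $2N$ iterations, whereas you correctly identify that a literal re-evaluation of $\mathbf{h}\mathbf{\Gamma}\mathbf{g}$ in Steps 13--19 would cost $\mathcal{O}(N^2)$ and that caching the running partial sum is what makes each iteration $\mathcal{O}(1)$.
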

\begin{proof}
	See Appendix C.
\end{proof}
\vspace{-0.2cm}
Proposition 5 shows that the proposed phase shift-free PB scheme preserves the quadratic growth behavior of the beamforming gain associated with the classical phase shift-based one, see Fig. 2(b). This can be explained as follows. Note that, although the signals reflected from the RIS (according to Algorithm 1) cannot be coherently combined, yet, the overall amplitude of their sum still linearly grows with the number of activated elements $N_a$. Also, by noting that $N_a$ grows with $N$, the overall S-RIS-D channel gain ($\bar{H}$) and, thus, the SNR grows quadratically with the RIS size $N$. Furthermore, contrary to the classical phase shift-based PB, the proposed PB scheme preserves the full power of the reflected signals at the expense of perfectly aligning them. This trade-off, between fully reflecting the signals and perfectly aligning them, gives the superiority to the proposed scheme over the classical one under phase error conditions, where the perfect alignment is inherently impossible.
\begin{figure}[t]
	\centering  
	\subfloat[]{\label{fig2:a}\includegraphics[width=45mm]{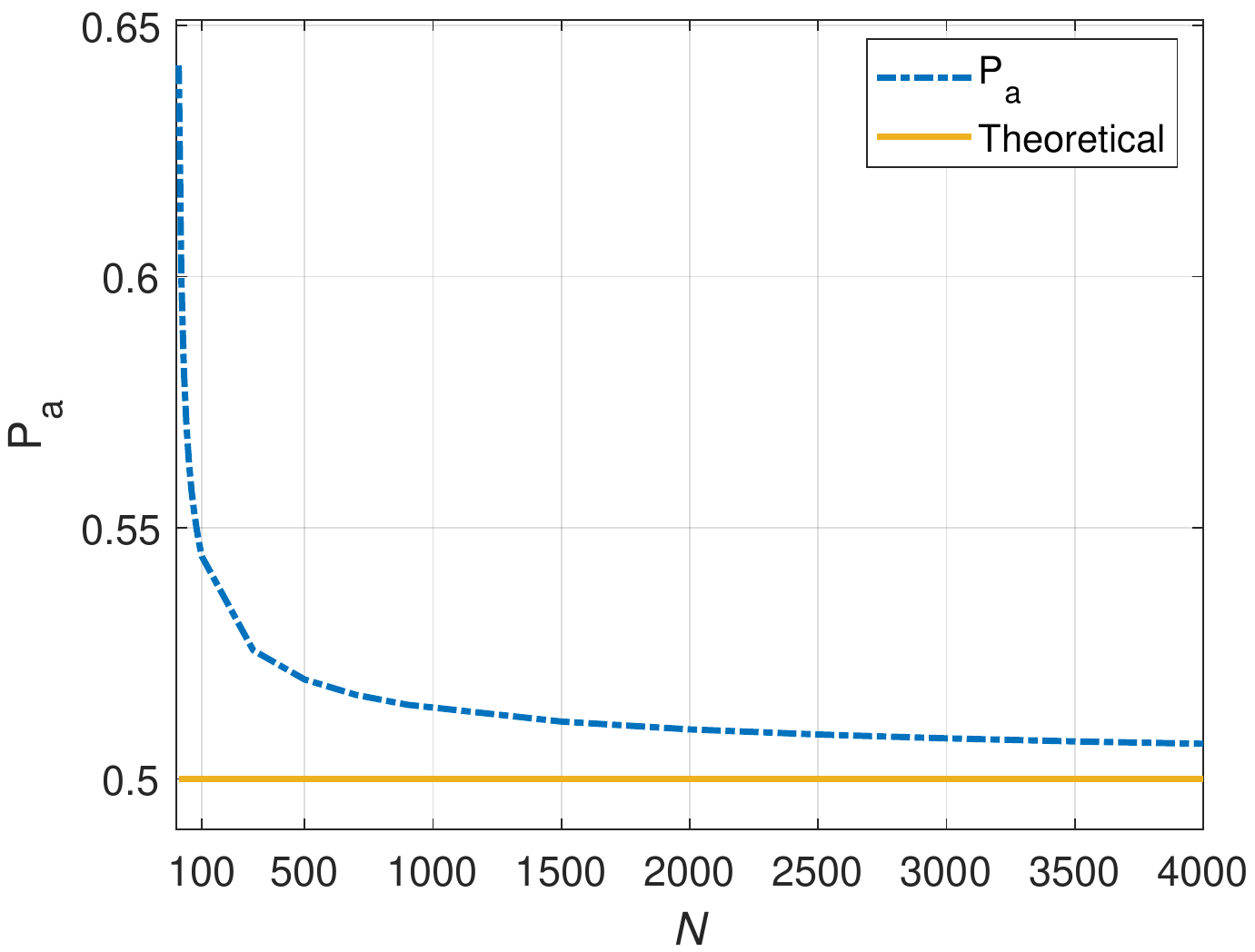}}
	\subfloat[]{\label{fig2:b}\includegraphics[width=45mm]{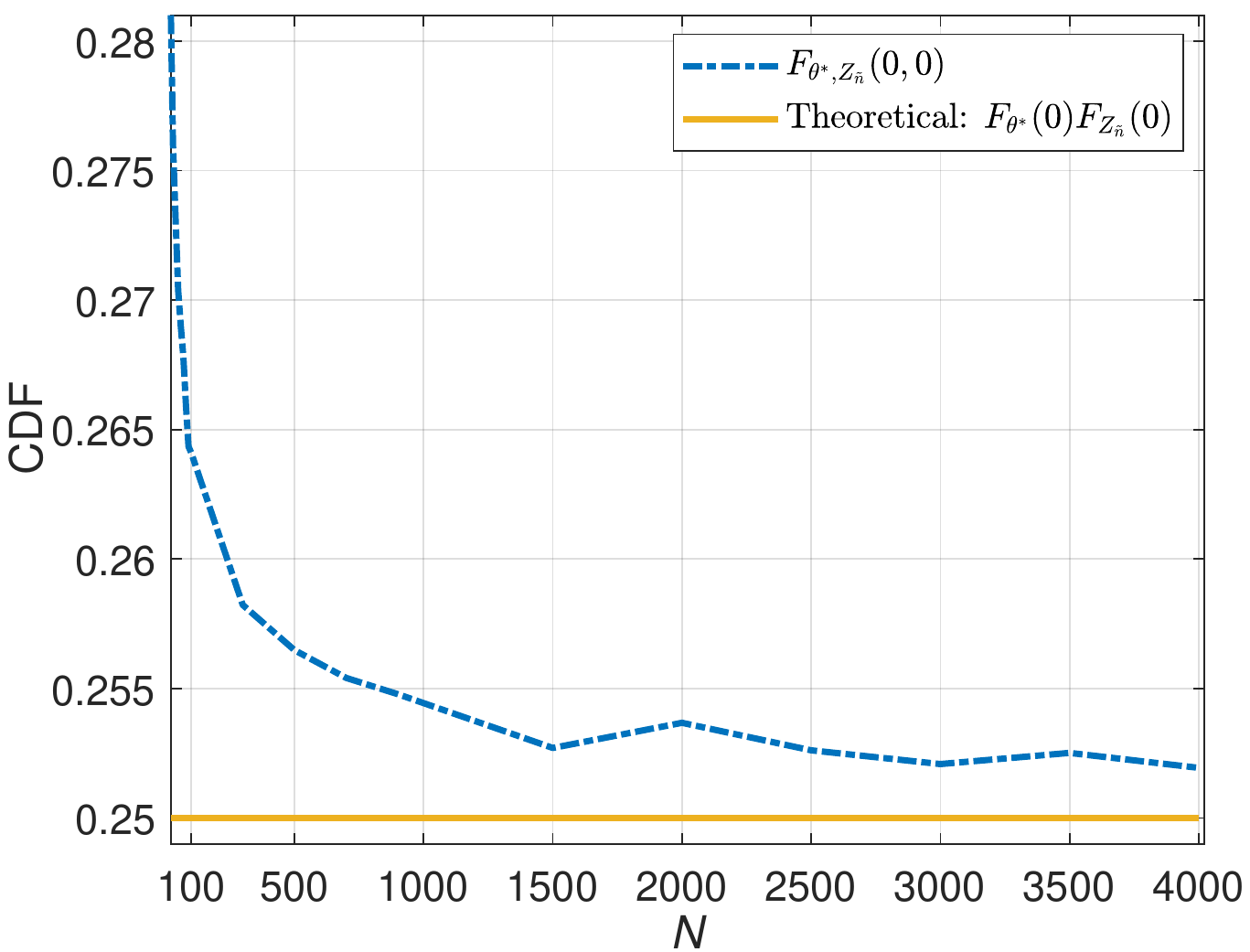}}
	\caption{(a) The activation probability and (b) the joint CDF $F_{\theta^*,Z_n}(0,0)$, both versus $N$.}
	\label{fig:P_a_cdf}
	\vspace{-0.4cm}
\end{figure}
\section{Simulation Results}\label{sec:Simu}
This section presents comprehensive computer simulations to examine the performance of both the proposed and benchmark schemes under different system settings. In particular, we consider the system performance under phase errors and/or spatial correlation. The adopted RIS size is $N=40$, where each element has a width $d_H$ and a length $d_V$, and we used the spatial correlation model presented in \cite{spatial-corr}. Note that, by default, no spatial correlation or phase errors are assumed in our simulations unless otherwise stated. We consider two different benchmark schemes; first, the classical PB (labeled with classical PB), where the applied phase shift on each RIS element is meant to remove the S-RIS-D channel phase associated with that element.
Second, the reflection phase selection algorithm (RPSA),  which is proposed in \cite{opt_ris} to mitigate the reflection phase and amplitude correlation issue, is considered. Note that, for the benchmark schemes, the reflection phase and amplitude are correlated through \eqref{eq:theta_amp} \cite{main_pract}. Also, for the benchmark schemes, two phase shift levels are adopted ($0$ and $\pi$) for a fair comparison with the proposed scheme that uses one control bit for the on/off adjustment of each RIS element. The RIS-D distance is $r_D=10$ m, and the S-RIS distance is given by $r_S=\lceil\frac{N\lambda}{2}\rceil$\cite{Expermintal-RIS} to ensure that the RIS is in the far-field region of S, where $\lambda$ is the wavelength associated with the operating frequency ($1.8$ GHz). The overall S-RIS-D path loss is given by $(L)^{-1}=\lambda^4/(256\pi^2r_S^2r_D^2)$ \cite{Ellingson}, and the noise level is chosen to be $\sigma^2=-90$ dBm.

Note that, considering the RPSA algorithm, it should be noted that both the RPSA and the proposed Algorithm 1 achieve the quadratic growth of the SNR with $N$ and requires the same complexity level, $\mathcal{O}(N)$. Furthermore, in all of the provided simulations, the proposed scheme activates, on average, around $60\%$ of the RIS elements, while, as stated by Proposition 1, the asymptotic number of activated  RIS elements is $N_a=N/2$. This effectively reduces the unavoidable EMI associated with each RIS element \cite{EMI-ris}. Furthermore, for autonomous RISs \cite{ris-auto}, the energy consumption associated with the control circuit of the RIS becomes critical. In this regard, our proposed scheme can save more than $50\%$ of the operating energy through the off-state elements. Also, using the energy harvesting strategy provided in \cite{ris-auto}, the off-state elements can be used as energy harvesters. In light of these, the proposed phase shift-free PB scheme makes the use of the RIS more energy-efficient with less reflection to the EMI interference compared to the benchmark schemes.

\begin{figure}[t!]
	\subfloat[]{\includegraphics[width=44mm,height=44mm]{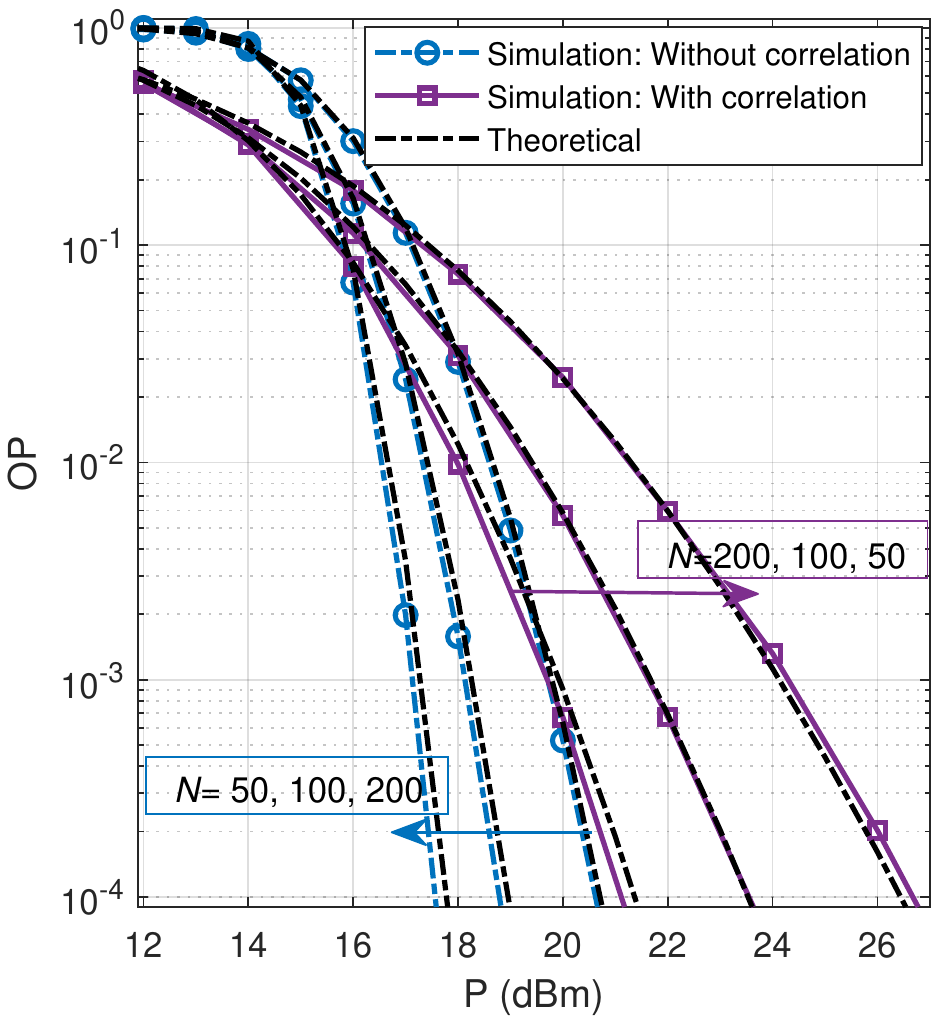}}	\subfloat[]{\includegraphics[width=44mm,height=44mm]{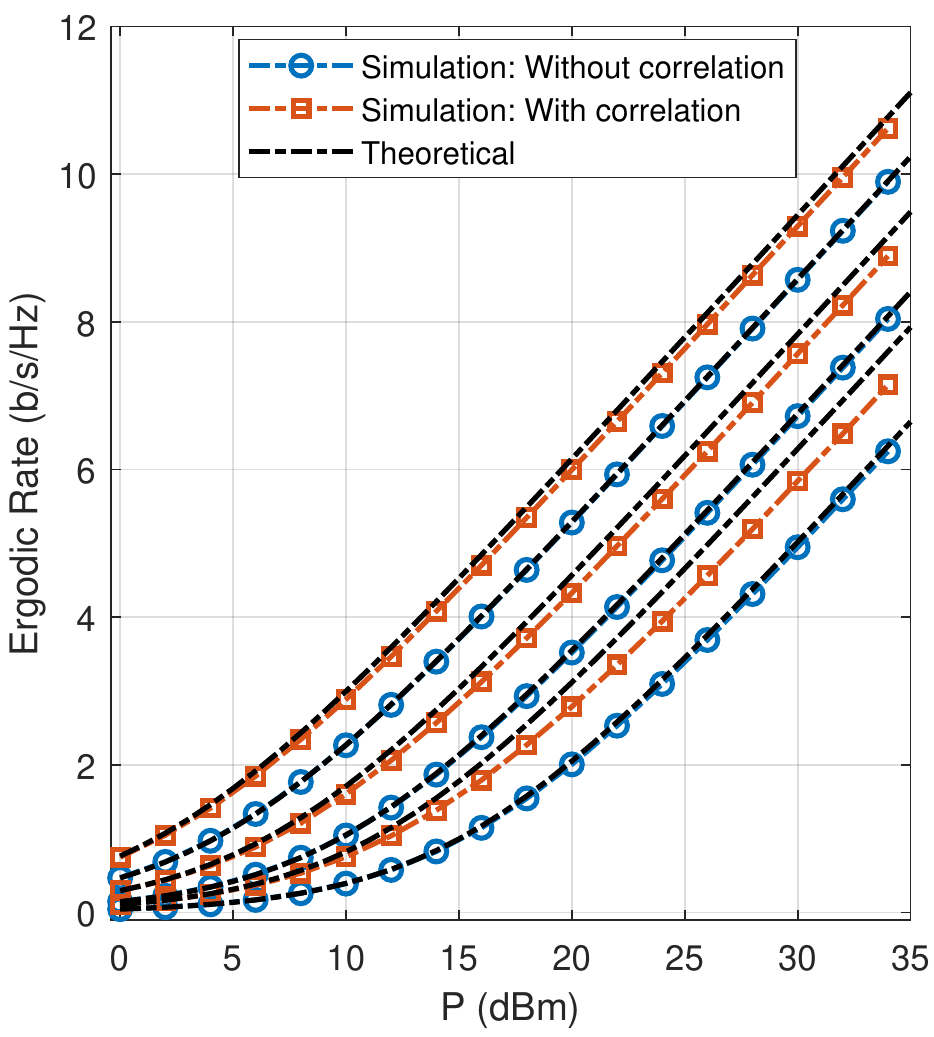}}
	\caption{The performance, with/without spatial correlation and with different $N$ values, of (a) outage probability with $r=1$ bit per channel use (bpcu), and (b) ergodic rate.}
	\label{fig:OP_R_theo}
	\vspace{-0.6cm}
\end{figure}

In Figs. \ref{fig:P_a_cdf}(a) and (b), we verify  Proposition 1 and Lemma 3, respectively. In particular, in Fig. \ref{fig:P_a_cdf}(a), the activation probability $P_a$ converges to the asymptotic value ($0.5$) as $N$ increases. In Fig. \ref{fig:P_a_cdf}(b), without loss of generality, we evaluate the joint CDF of $\theta^*$ and $Z_{\tilde{n}}$ at the point ($\theta=0,z=0$), where it can be seen that, asymptotically, the product of the individual CDFs converges to the joint CDF. 

Fig. \ref{fig:OP_R_theo}(a) shows the outage probability (OP) performance with different $N$ values, where increasing the RIS size improves the performance effectively. Furthermore, Fig. \ref{fig:OP_R_theo}(a) verifies Proposition 3, where the theoretical curves converge to those obtained through simulations as $N$ increases, with a perfect match at $N=200$. Furthermore, it can be seen that the OP performance deteriorates under spatial correlation due to the lack of diversity. Likewise, Fig. \ref{fig:OP_R_theo}(b) verifies Proposition 4, where the theoretical upper bound is shown to be tight and has a close match to the exact one. Furthermore, the ergodic rate is shown to be an increasing function in $N$, with, notably, higher performance under spatial correlation, which can be explained as follows. Under spatial correlation, the 2D vectors (S-RIS-D channel coefficients) are more concentrated in a specific direction (angle), which increases the number of vectors within the activation window, $N_a$. Consequently, as $N_a$ increases $\text{SNR}$ quadratically increases as stated in Proposition 5, which leads to a better performance in the ergodic rate. Note that, in Fig. \ref{fig:OP_R_theo}(b), in order to show the effect of increasing $N$ only, we ignore the distance effect by obtaining $r_s$ for $N=50$ and use it for the other values of $N$.

In Figs. \ref{fig:OP_err}(a) and (b), we show the outage probability performance with/without phase error and spatial correlation effects, as follows. In Fig. \ref{fig:OP_err}(a), it can be seen that the proposed scheme achieves a  $2$ dB gain in the required $P$ compared to the classical PB, while it is slightly better than RPSA. In Fig. \ref{fig:OP_err}(b), we show the combined effect of spatial correlation and phase errors, where the proposed scheme achieves around $5$ dB performance gain in the required $P$ compared to both benchmark schemes. 

In Figs. \ref{fig:Rate_Err}(a) and (b), we show the ergodic rate performance for all schemes with/without phase errors and spatial correlation effects, as follows. In Fig. \ref{fig:Rate_Err}(a), all schemes achieve almost the same performance, where neither phase errors or spatial correlation is considered. In Fig. \ref{fig:Rate_Err}(b), the combined effect of spatial correlation and phase errors is shown, where the proposed scheme achieves a remarkable performance gain of almost $5$ dB in the required $P$ compared to the benchmark schemes.
\begin{figure}[t!]
	\subfloat[]{\includegraphics[width=44mm,height=44mm]{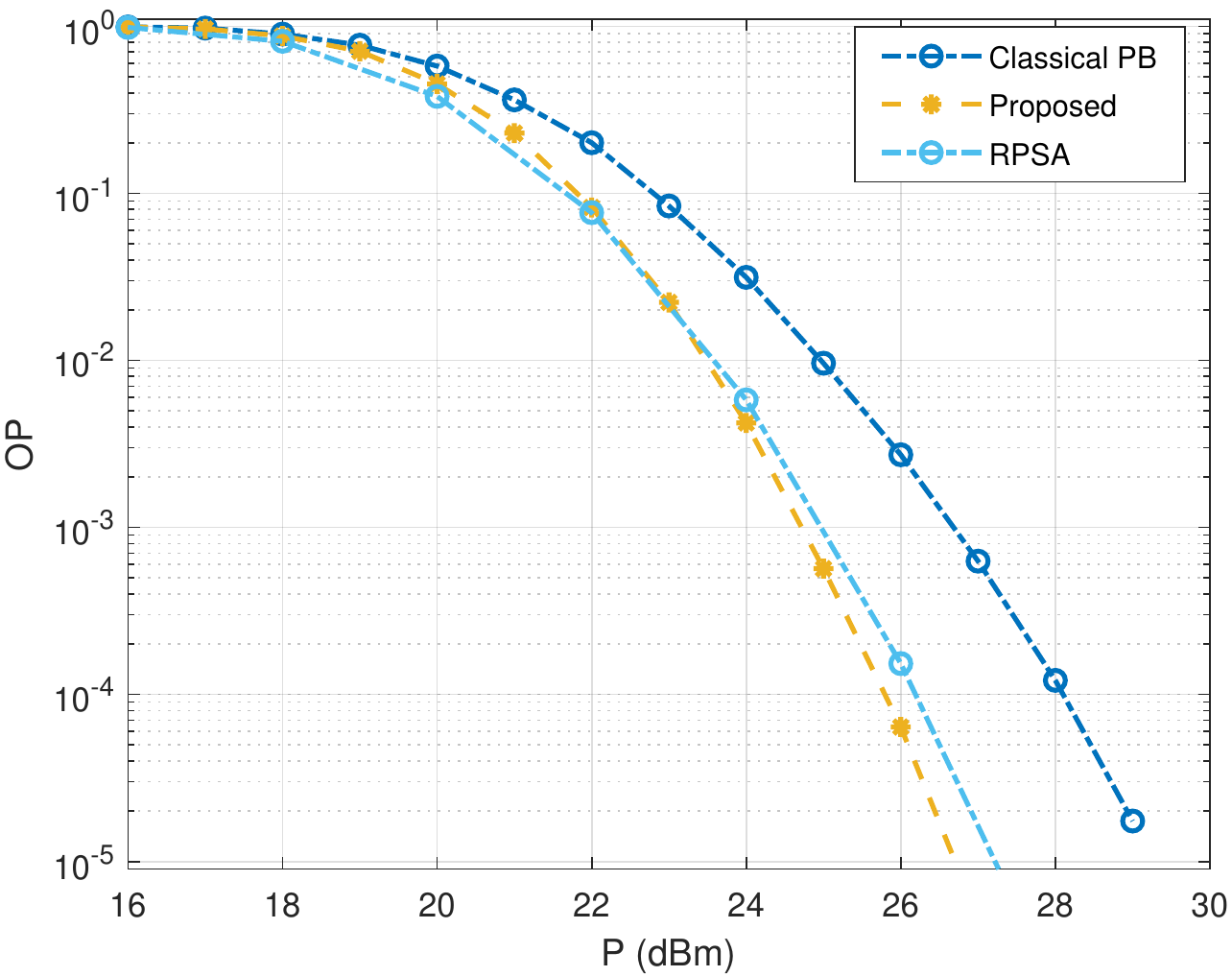}}	\subfloat[]{\includegraphics[width=44mm,height=44mm]{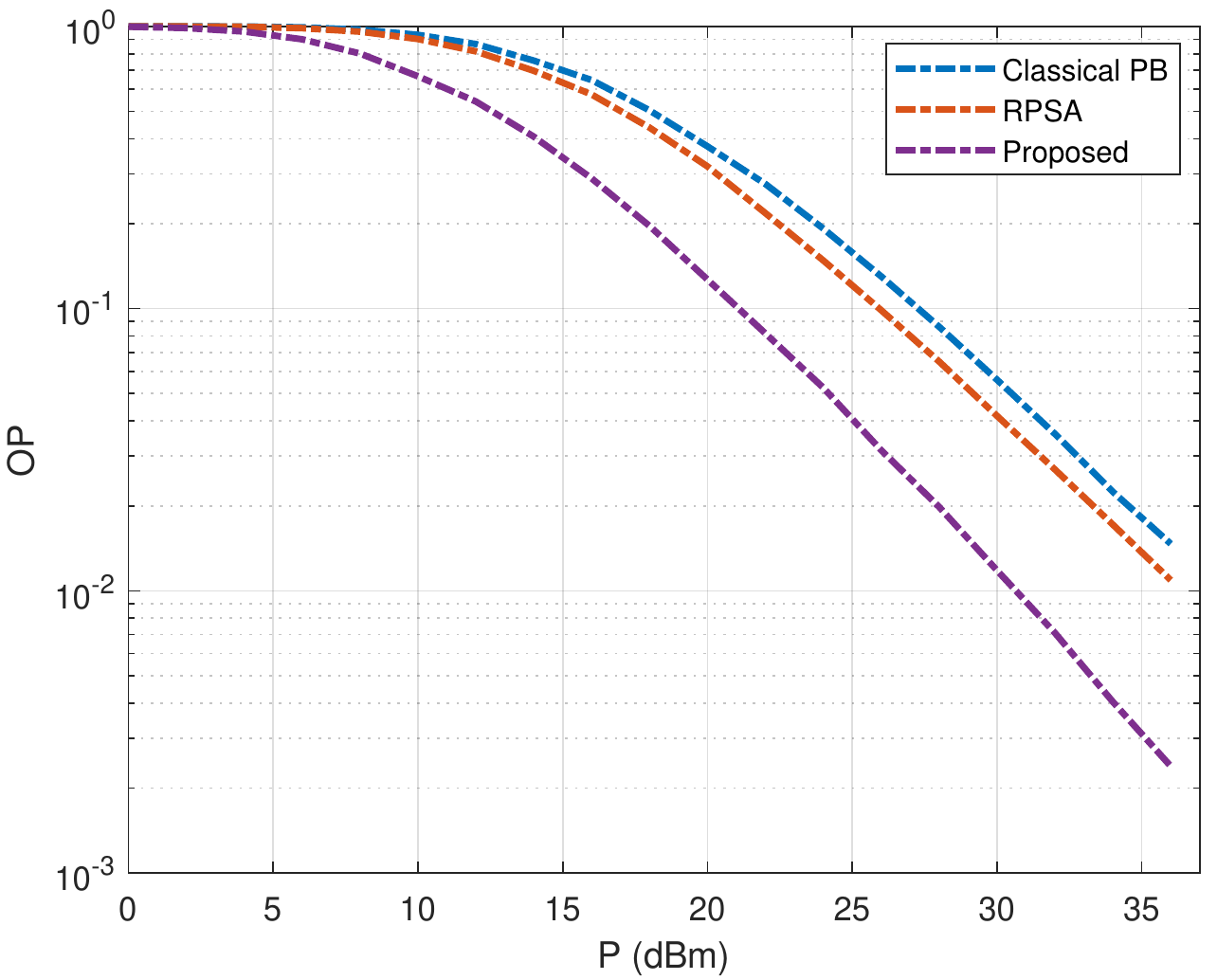}}
	\caption{Outage probability performance with (a) no phase errors, no spatial correlation, and $r=2$ bpcu, (b) spatial correlation ($d_H=d_V=d=\lambda/8$) \cite{amp-RFocus}, \cite{spacing_ele2}, \cite{spacing_ele1}, phase errors ($\kappa=0$) \cite{phs-err2-let}, \cite{unif-err}, and $r=0.5$ bpcu. }
	\label{fig:OP_err}
	\vspace{-0.6cm}
\end{figure}
\begin{figure}[t!]
	\subfloat[]{\includegraphics[width=44mm,height=44mm]{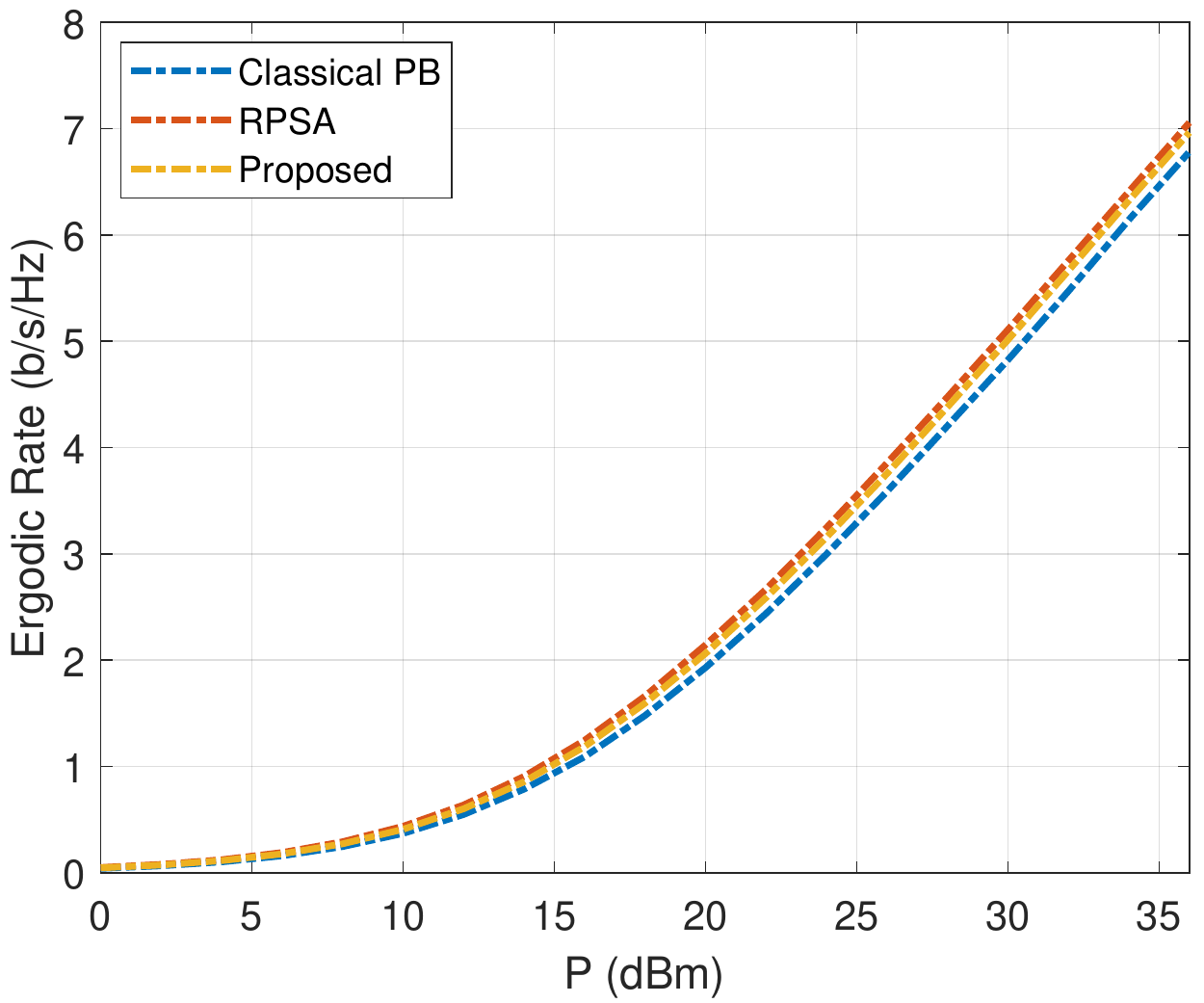}}	\subfloat[]{\includegraphics[width=44mm,height=44mm]{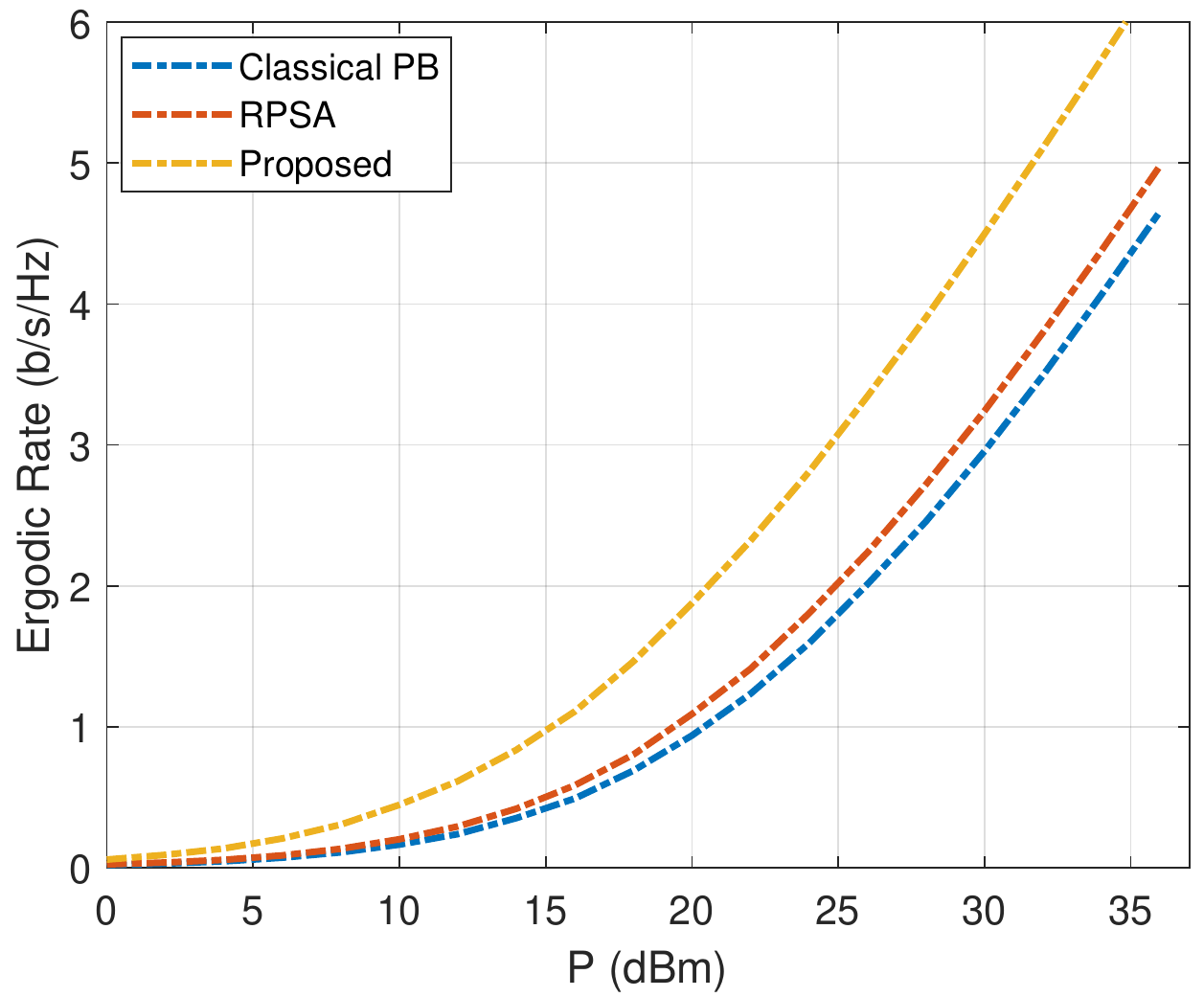}}
	\caption{Ergodic rate performance with (a) no spatial correlation and no phase errors, (b) spatial correlation ($d_H=d_V=d=\lambda/8$) and phase errors ($\kappa=0$).}
	\label{fig:Rate_Err}
	\vspace{-0.7cm}
\end{figure}
\hspace{-0.04cm}It is worth noting that, as shown in Fig. \ref{fig:OP_err}(b) and Fig. \ref{fig:Rate_Err}(b), unlike the proposed scheme, both benchmark schemes suffer under high spatial correlation and phase errors. In particular, the observed behavior of the classical PB under phase errors is reported in \cite{phs-err2-let}, where it is shown that the classical phase shift design (removing the channel phases) may lead to the blind PB scenario, especially when the level of phase errors is high ($\kappa=0$). Also, it is reported in \cite{spatial_ris} that classical phase shift design under spatial correlation leads to a worse performance compared to other phase shift techniques. 

Finally, we verify Proposition 2 in Figs. \ref{fig:ROP}(a) and (b), where $P_a=0.5$ is the asymptotic value of the activation probability obtained from Proposition 1, while the other values, $P_a=0.542$ and $0.5058$, are obtained through simulations for $N=100$ and $5000$, respectively. It can be seen that the upper bound curve $\text{ROP}_\text{U}$ gets closer to the simulation curve when using the value of $P_a$ obtained through simulations. This is due to the convergence of $P_a$ and the approximation of $\text{ROP}_\text{U}$ in Proposition 1 and 2, respectively, requires that $N\rightarrow\infty$. Furthermore, it can be noted that the probability that $N_a$ is within $\pm2\%$ of its mean value increases with $N$, where it is $0.121$ and $0.9102$ for $N=100$ and $5000$, respectively.
\section{Conclusion}\label{sec:Concl}
This study has proposed a low-complexity and novel phase shift-free PB scheme to avoid the hardware implementation and performance degradation issues associated with the classical phase shift-based PB. The proposed PB scheme requires a complexity level linear in $N$ and achieves the same scaling law for the SNR (quadratic growth with $N$) as in the classical phase shift-based PB. Computer simulations show that the proposed PB scheme is superior to the considered benchmark schemes (including the classical PB scheme) in terms of the outage probability and ergodic rate performance. Furthermore, the proposed PB scheme exhibits far less sensitivity to practical system settings, such as the spatial correlation between RIS elements and the phase errors. To conclude, extending the proposed PB scheme to the multiple-input multiple-output systems is an appealing future research direction.
\begin{figure}[t]
	\subfloat[]{\includegraphics[width=44mm,height=45mm]{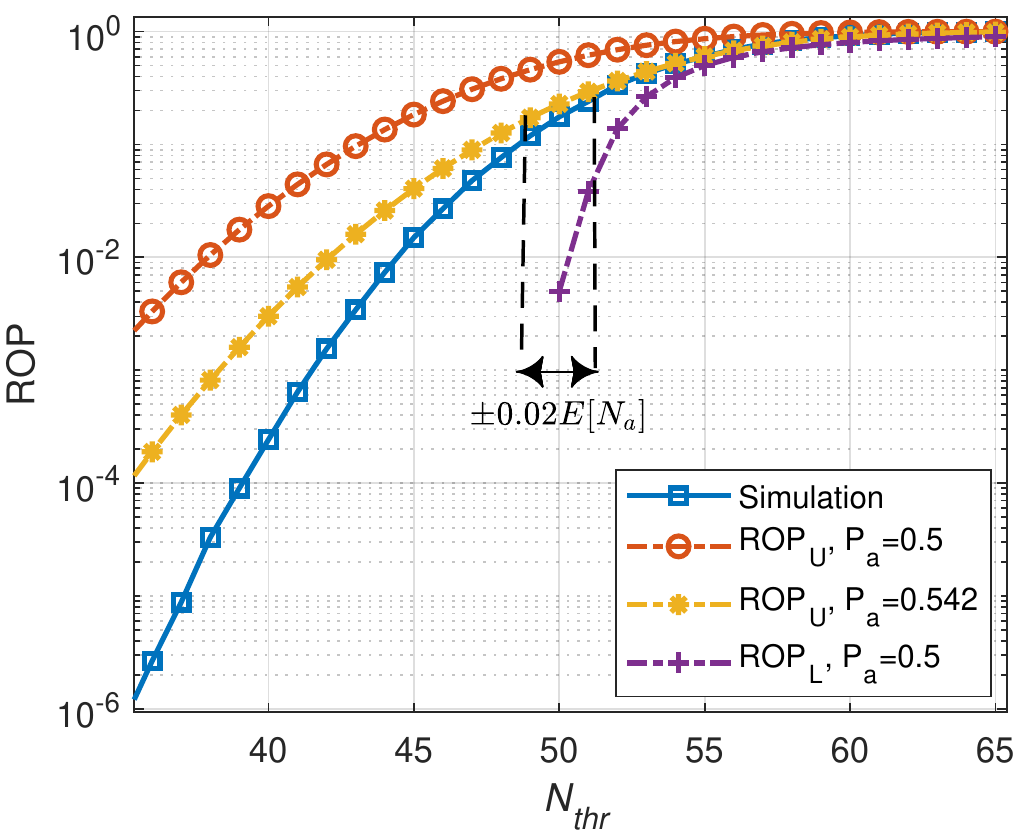}}
	\subfloat[]{\includegraphics[width=44mm,height=45mm]{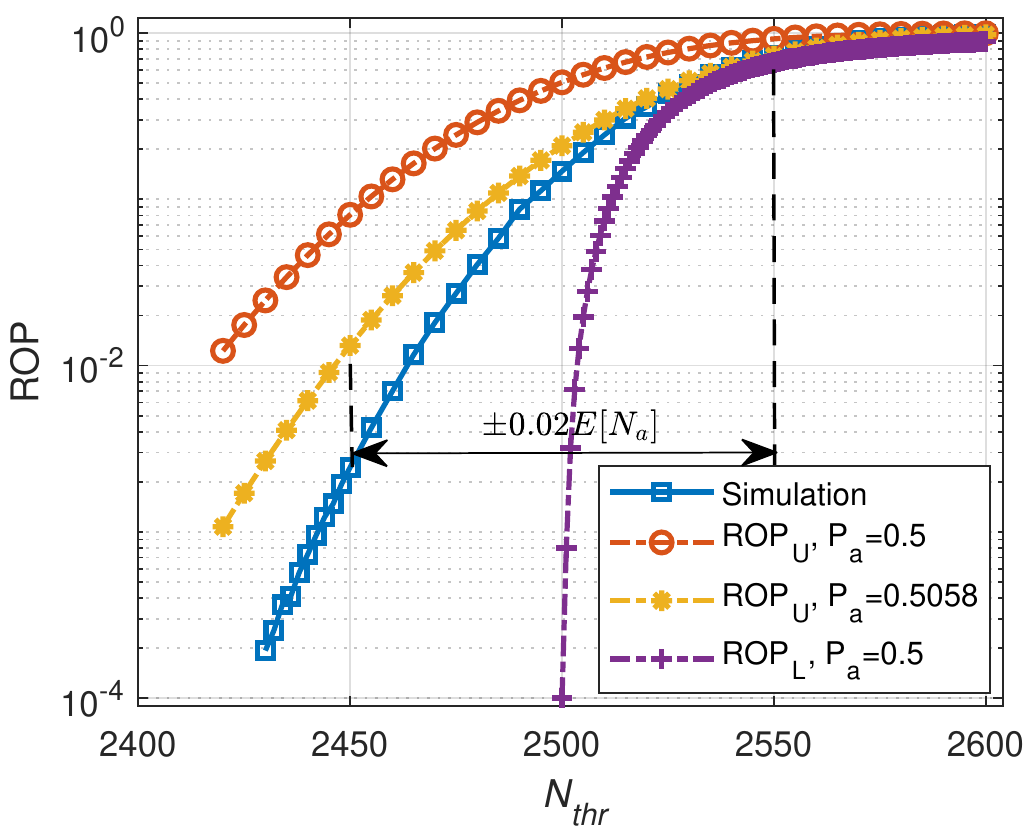}}
	\caption{ROP performance with (a) $N=100$, (b) $N=5000$. }
	\label{fig:ROP}
	\vspace{-0.5cm}
\end{figure}
\appendices
\vspace{-0.2cm}
\section{Proof of Lemma 3}
First, it can be noted that the RVs $\theta^*, Z_1, ..., Z_N$ cannot be mutually independent, where, from \eqref{eq:theta}, $\theta^*$ is a function of $Z_1, ..., Z_N$. Second,
note that $h_1, ..., h_N$ and $g_1, ..., g_N$, by our definition of the S-RIS and RIS-D channels, are mutually independent, which means that the products $h_1g_1, ..., h_Ng_N$ are also mutually independent. Consequently, we conclude that the angles $Z_1, ..., Z_N$ associated with the products are also mutually independent, thus, pairwise independent. In light of these, we need to show that $\theta^*$ and $Z_n$ are asymptotically independent (for any given $n$) to obtain the desired result; $\theta^*, Z_1, ..., Z_N$ are pairwise independent.

Note that the summation in \eqref{eq:theta} can be rewritten as a weighted sum of two unit vectors, 
\begin{align}
	\theta^*&=\arg\left(\sum_{n=1}^{N}w_ne^{jZ_n}\right)\nonumber\\
	&=\arg(w_{N-1}e^{j\bar{Z}_{N-1}}+w_{\tilde{n}}e^{jZ_{\tilde{n}}}),\;\text{for}\;\tilde{n}\neq n,
\end{align}
where $e^{j\bar{Z}_{N-1}}$ and $e^{jZ_{\tilde{n}}}$ are 2D unit vectors, 
$w_{N-1}$ and $w_{\tilde{n}}$ are the wights associated with them, respectively. Here, $w_{N-1}=|\sum_{n=1}^{N-1}h_ng_n|, w_{\tilde{n}}=|h_{\tilde{n}}g_{\tilde{n}}|$, $\bar{Z}_{N-1}=\arg(\sum_{n=1}^{N-1}h_ng_n)$, and $Z_{\tilde{n}}=\arg(h_{\tilde{n}}g_{\tilde{n}})$. Note that the pairs ($w_{N-1}, \bar{Z}_{N-1})$ and $(w_{\tilde{n}}, Z_{\tilde{n}})$  are independent RVs, and $\tilde{n}$ denotes the index of any particular RIS element. In what follows, we show the asymptotic independence of $\theta^*$ and $Z_{\tilde{n}}$ by showing that $w_{\tilde{n}}<<w_{N-1}$ as $N\rightarrow \infty$, which means that, asymptotically, the direction $Z_{\tilde{n}}$ has a trivial effect on the dominant direction $\theta^*$. This can be shown by evaluating the following limit,
\begin{align}
\mathcal{L}&=\lim_{N\rightarrow \infty}P\left(w_{N-1}>tw_{\tilde{n}}\right)\nonumber\\
&=\lim_{N\rightarrow \infty}P\left(w_{N-1}-tw_{\tilde{n}}>0\right)\nonumber\\
&=\lim_{N\rightarrow \infty}1-F_{\bar{w}}(0),
\end{align}
where $t>1$, $\bar{w}=w_{N-1}-tw_{\tilde{n}}$, and $F_{\bar{w}}(0)$ is the CDF of $\bar{w}$ evaluated at zero. Note that, as $N\rightarrow \infty$, the sum $\sum_{n=1}^{N-1}h_ng_n$ converges in distribution to complex Gaussian and thus, $w_{N-1}$ is a Rayleigh RV. However, it is challenging to derive the PDF associated with the distribution of $\bar{w}$, where it corresponds to the difference of a Rayleigh RV and the products of two other Rayleigh RVs. Therefore, we use the Distribution Fitting Tool in MATLAB to fit the distribution of $\bar{w}$ and then obtain its CDF at zero.
\begin{figure}[t]
	\begin{center}
		\includegraphics[width=50mm]{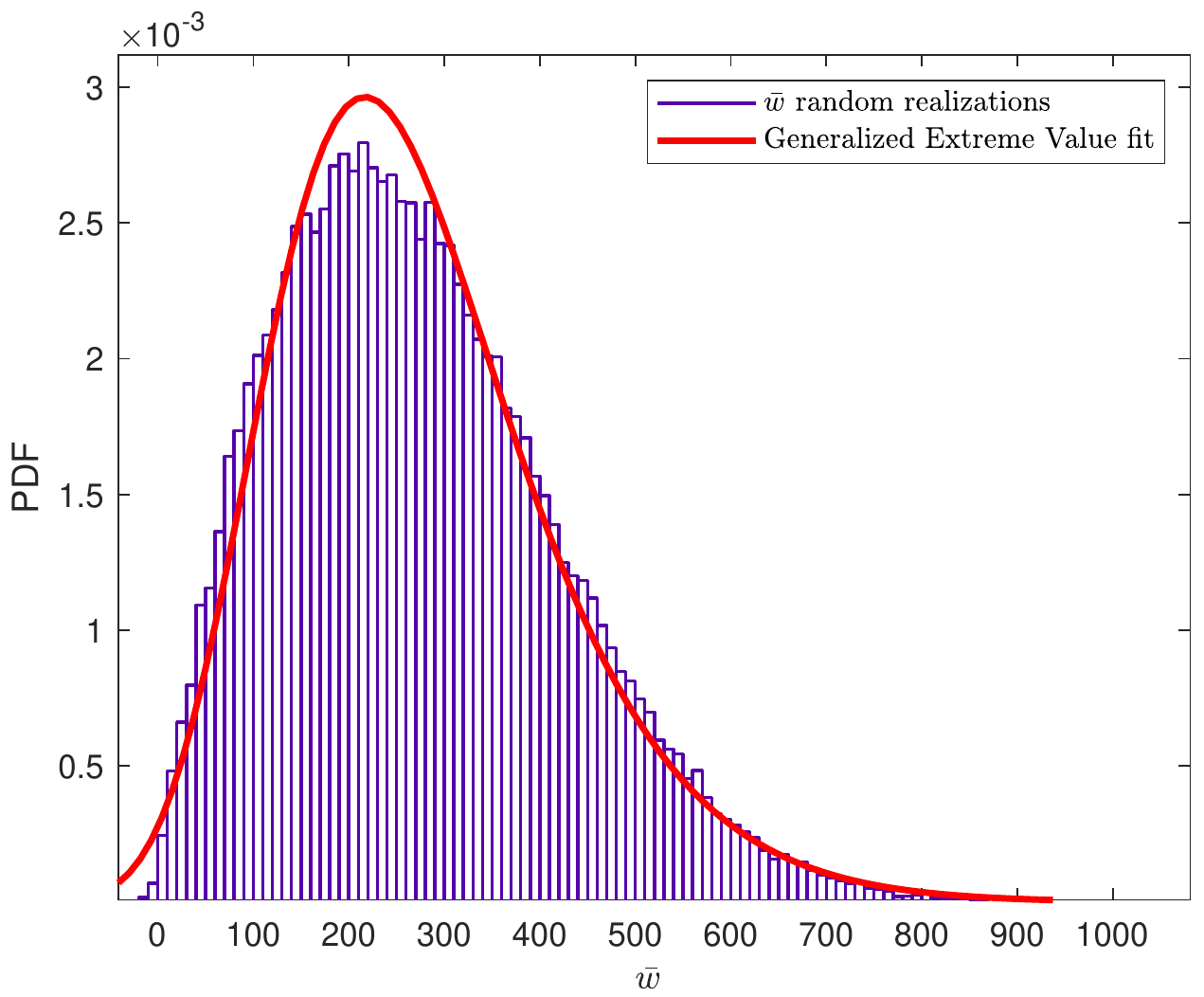}
		\caption{Fitting the distribution of $\bar{w}$ to the generalized extreme value distribution.}		\label{fig:log-normal-fit2}
		\vspace{-0.7cm}
	\end{center}
\end{figure}
As shown in Fig. \ref{fig:log-normal-fit2}, the generalized extreme value (GEV) distribution closely matches the distribution of $\bar{w}$, with $N=10^5$ and the scale, location, and shape parameters are $k=-0.0717$, $\sigma=124.464 $, and $\mu=207.635$, respectively, and without loss of generality, $t=10$. Finally, by evaluating the CDF of the GEV RV with the obtained fitting parameters, we get $F_{\bar{w}}(0)=0.008$ and $\mathcal{L}=0.992$. This shows that the weight $w_{\tilde{n}}<<w_{N-1}$ as $N\rightarrow \infty$, where the dominant direction $\theta^*$ does not depend on a specific direction $Z_{\tilde{n}}$, but, on the combination of all $N$ directions.
Finally, we obtain that $\theta^*, Z_1, ..., Z_N$ are pairwise independent RVs, which completes the proof.\hspace{7.2cm}\qedsymbol  
\section{Proof of Proposition 1}
Note that, for the given $n^{th}$ element, its activation probability has the minimum value when $\theta^*$ is independent of $Z_n$, which occurs at the asymptotic range when $N\rightarrow\infty$ according to Lemma 3. In contrary to this, the activation probability is higher for smaller $N$ values; for example, it can be readily verified from \eqref{eq:theta}  that the activation probability for $N=1$ is always one. In what follows, we show that the asymptotic value of the activation probability is $0.5$, which corresponds to the minimum value as decried earlier.

Note that the $n^{th}$ RIS element is activated ($\eta_n=1$) according to Algorithm 1 when the phase $Z_n$ associated with its channel coefficient ($h_ng_n$) lies in the valid (activation) region $R$, which is specified by the endpoints $\Pi(c_1)$ and $\Pi(c_2)$, as shown in Fig. \ref{Fig:Angles}. It can be also noted form Figs. \ref{Fig:Angles}(a)-(b) that, depending on the value of $\Pi(\theta^*)$, the endpoints $\Pi(c_1)$ and $\Pi(c_2)$ of $R$ switch their positions. Therefore, in what follows, we calculate the activation probability by considering the two cases shown in Figs. \ref{Fig:Angles}(a)-(b), separately.

Let $P_{\Pi(c_1)\geq\Pi(c_2)}$ and $P_{\Pi(c_1)\leq\Pi(c_2)}$ denote the activation probability for the first and second cases shown in Figs. \ref{Fig:Angles}(a) and (b), respectively, then we have
{\small
\begin{align}
	P_{\Pi(c_1)\geq\Pi(c_2)}&=P(\Pi(Z_n)\in[\Pi(c_2),\Pi(c_1)], \Pi(c_1)\geq\Pi(c_2)),\\
	P_{\Pi(c_1)\leq\Pi(c_2)}&=P(\Pi(Z_n)\in[\Pi(c_1),\Pi(c_2)], \Pi(c_1)\leq\Pi(c_2)),
\end{align}
}%
\vspace{-0.3cm}
consequently, using the total probability law, we obtain
\vspace{0.2cm}
{\small
\begin{align}
P_a=	P_{\Pi(c_1)\geq\Pi(c_2)}+P_{\Pi(c_1)\leq\Pi(c_2)}.\label{eq:P_a}
\end{align}
}%

\indent In order to calculate $P_{\Pi(c_1)\geq\Pi(c_2)}$, we note that, as shown in Fig. \ref{Fig:Angles}(a), the activation region $R=R_1\cup R_2$, where $R_1$ and $R_2$ are the two disjoint regions specified by the pairs of endpoints $[\Pi(c_1),0]$ and $[0, \Pi(c_2)]$, respectively. Hence, we obtain
{\small
\begin{align}
	P_{\Pi(c_1)\geq\Pi(c_2)}&=P(\Pi(Z_n)\geq\Pi(c_1))+P(\Pi(Z_n)\leq\Pi(c_2)).\label{eq:P_c1_g_c2}
\end{align}
}%
\begin{figure}[t]
	\centering
	\subfloat[]{\includegraphics[width=40mm]{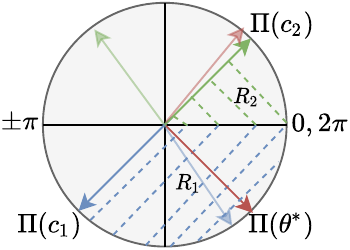}}
	\hspace{0.5cm}	\subfloat[]{\includegraphics[width=40mm]{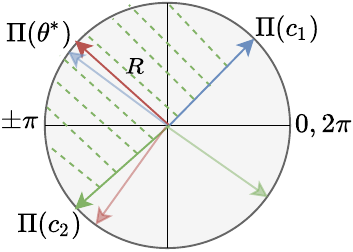}}
	\caption{The valid (activation) region, shown with dash lines, for the case (a) $\Pi(c_1)\geq\Pi(c_2)$ (b) $\Pi(c_1)\leq\Pi(c_2)$.}
	\label{Fig:Angles}
\vspace{-0.5cm}
\end{figure}\indent In order to calculate $P(\Pi(Z_n)\geq\Pi(c_1))$, we note that when $\Pi(c_1)\geq\Pi(c_2)$, we have $\theta^*$ in one of the two intervals $[-\frac{\pi}{2},0]$ or $[0, \frac{\pi}{2}]$. On the other side, from Lemma 1, $Z_n$ lies in one of the two intervals $[-2\pi,0]$ or $[0, 2\pi]$. Also note that, according to \eqref{eq:pi_fun}, $\pm2\pi$ is added whenever $Z_n$ and/or $\theta^*\pm\frac{\pi}{2}$ is out of the natural range of the angle, $[0,2\pi)$. In light of these, by considering the four different combinations of the two intervals associated with $\theta^*$ and $Z_n$, we obtain
{\small
\begin{align}
P(\Pi(Z_n)\geq\Pi(c_1))&=P_{A1}+P_{A2}+P_{A3}+P_{A4}.\label{eq:z_g_c1}
\end{align}
}%
We define the probabilities $P_{Ai}$ as follows for $i\in\{1,..., 4\}$:
{\small
\begin{align}
P_{A1}&=P(Z_n\geq \theta^*-\frac{\pi}{2}+2\pi, \theta^*\in[-\frac{\pi}{2},0], Z_n\in[0,2\pi])\nonumber\\
&=\int_{-\frac{\pi}{2}}^{0}\int_{\theta+\frac{3\pi}{2}}^{2\pi}f_{Z_n,\theta^*}(z,\theta)d_{z}d_{\theta},
\end{align}
}%
by considering Lemma 3, as $N\rightarrow\infty$, we have
{\small
\begin{align}
P_{A1}&=\int_{-\frac{\pi}{2}}^{0}\int_{\theta+\frac{3\pi}{2}}^{2\pi}f_{Z_n}(z)f_{\theta^*}(\theta)d_zd_{\theta}\nonumber\\
&=\int_{-\frac{\pi}{2}}^{0}\int_{\theta+\frac{3\pi}{2}}^{2\pi}\left(\frac{-z+2\pi}{4\pi^2}\right)\left(\frac{1}{2\pi}\right)d_zd_{\theta}=0.0182,
\end{align}
}%
in the same way, we obtain
{\small
\begin{align}
	P_{A2}&=P(Z_n\geq \theta^*-\frac{\pi}{2}+2\pi, \theta^*\in[0,\frac{\pi}{2}], Z_n\in[0,2\pi])\nonumber\\
	&=\int_{0}^{\frac{\pi}{2}}\int_{\theta+\frac{3\pi}{2}}^{2\pi}\left(\frac{-z+2\pi}{4\pi^2}\right)\left(\frac{1}{2\pi}\right)d_zd_{\theta}=0.0026,
\end{align}
}%
{\small
\begin{align}
P_{A3}&=P(Z_n+2\pi\geq \theta^*-\frac{\pi}{2}+2\pi, \theta^*\in[-\frac{\pi}{2},0], Z_n\in[-2\pi,0])\nonumber\\
&=\int_{-\frac{\pi}{2}}^{0}\int_{\theta-\frac{\pi}{2}}^{0}\left(\frac{z+2\pi}{4\pi^2}\right)\left(\frac{1}{2\pi}\right)d_zd_{\theta}=0.0755,
\end{align}
}%
{\small
\begin{align}
	P_{A4}&=P(Z_n+2\pi\geq \theta^*-\frac{\pi}{2}+2\pi, \theta^*\in[0,\frac{\pi}{2}], Z_n\in[-2\pi,0])\nonumber\\
	&=\int_{0}^{\frac{\pi}{2}}\int_{\theta-\frac{\pi}{2}}^{0}\left(\frac{z+2\pi}{4\pi^2}\right)\left(\frac{1}{2\pi}\right)d_zd_{\theta}=0.0286,
\end{align}
}%
thus, from \eqref{eq:z_g_c1}, we obtain $P(\Pi(Z_n)\geq\Pi(c_1))=0.125$.

For $P(\Pi(Z_n)\leq\Pi(c_2))$, by following the same steps above, we have
{\small
\begin{align}
		P(\Pi(Z_n)\leq\Pi(c_2))&=\int_{-\frac{\pi}{2}}^{0}\hspace{-0.08cm}\int_{0}^{\theta+\frac{\pi}{2}}\hspace{-0.1cm}\tilde{f}(z)d_zd_{\theta}+\hspace{-0.1cm}\int_{0}^{\frac{\pi}{2}}\hspace{-0.2cm}\hspace{-0.05cm}\int_{0}^{\theta+\frac{\pi}{2}}\hspace{-0.1cm}\tilde{f}(z)d_zd_{\theta}\nonumber\\	&+\hspace{-0.1cm}\int_{-\frac{\pi}{2}}^{0}\hspace{-0.08cm}\int_{-2\pi}^{\theta-\frac{3\pi}{2}}\hspace{-0.15cm}f^{'}\hspace{-0.1cm}(z)d_zd_{\theta}\hspace{-0.03cm}+\hspace{-0.03cm}\hspace{-0.1cm}\int_{0}^{\frac{\pi}{2}}\hspace{-0.2cm}\int_{-2\pi}^{\theta-\frac{3\pi}{2}}\hspace{-0.15cm}f^{'}\hspace{-0.1cm}(z)d_zd_{\theta}\nonumber\\
		&=0.125,
\end{align}
}%
where, $f^{'}(z)=(\frac{z+2\pi}{4\pi^2})(\frac{1}{2\pi})=\frac{z+2\pi}{8\pi^3}$ and $\tilde{f}(z)=(\frac{-z+2\pi}{4\pi^2})(\frac{1}{2\pi})=\frac{-z+2\pi}{8\pi^3}$.
Finally, from \eqref{eq:P_c1_g_c2}, we have $P_{\Pi(c_1)\geq\Pi(c_2)}=0.25$.

\vspace{0.15cm}
In order to calculate $P_{\Pi(c_1)\leq\Pi(c_2)}$, we note that there is only one activation (valid) region denoted by $R$, as shown in Fig. \ref{Fig:Angles}(b). Therefore, the  activation probability can be given as
{\small
\begin{align}
	P_{\Pi(c_1)\leq\Pi(c_2)}&=P(\Pi(Z_n)\in[\Pi(c_1),\Pi(c_2)])\nonumber\\
	&=P(\Pi(Z_n)\leq\Pi(c_2))-P(\Pi(Z_n)\leq\Pi(c_1))
	,\label{eq:P_c1_l_c2}
\end{align}
}%
where, following the same procedure to compute \eqref{eq:z_g_c1}, we calculate $P(\Pi(Z_n)\leq\Pi(c_1))$ as follows:
{\small
\begin{align}
P(\Pi(Z_n)\leq\Pi(c_1))&\hspace{-0.05cm}=\hspace{-0.05cm}\hspace{-0.1cm}\int_{\frac{\pi}{2}}^{\pi}\int_{0}^{\theta-\frac{\pi}{2}}\hspace{-0.1cm}\tilde{f}(z)d_zd_{\theta}+\hspace{-0.1cm}\int_{\frac{\pi}{2}}^{\pi}\int_{-2\pi}^{\theta-\frac{5\pi}{2}}\hspace{-0.1cm}f{'}(z)d_zd_{\theta}\nonumber\\
&+\hspace{-0.1cm}\int_{-\pi}^{\frac{-\pi}{2}}\hspace{-0.2cm}\int_{0}^{\theta+\frac{3\pi}{2}}\hspace{-0.2cm}\tilde{f}(z)d_zd_{\theta}\hspace{-0.05cm}+\hspace{-0.1cm}\int_{-\pi}^{\frac{-\pi}{2}}\hspace{-0.2cm}\int_{-2\pi}^{\theta-\frac{\pi}{2}}\hspace{-0.2cm}f{'}(z)d_zd_{\theta}\nonumber\\
&=0.125,
\end{align}
}%
and,
\vspace{-0.5cm}
{\small
\begin{align} P(\Pi(Z_n)\leq\Pi(c_2))&\hspace{-0.05cm}=\hspace{-0.05cm}\hspace{-0.05cm}\int_{\frac{\pi}{2}}^{\pi}\hspace{-0.1cm}\int_{0}^{\theta+\frac{\pi}{2}}\hspace{-0.2cm}\tilde{f}(z)d_zd_{\theta}+\hspace{-0.1cm}\int_{\frac{\pi}{2}}^{\pi}\hspace{-0.1cm}\int_{-2\pi}^{\theta-\frac{3\pi}{2}}\hspace{-0.1cm}f{'}(z)d_zd_{\theta}\nonumber\\
	&+\hspace{-0.1cm}\int_{-\pi}^{\frac{-\pi}{2}}\hspace{-0.1cm}\int_{0}^{\theta+\frac{5\pi}{2}}\hspace{-0.25cm}\tilde{f}(z)d_zd_{\theta}\nonumber+\hspace{-0.1cm}\int_{-\pi}^{\frac{-\pi}{2}}\hspace{-0.2cm}\int_{-2\pi}^{\theta+\frac{\pi}{2}}\hspace{-0.2cm}f{'}(z)d_zd_{\theta}\nonumber\\
	&=0.3750,
\end{align}
}%
thus from \eqref{eq:P_c1_l_c2}, we obtain	$P_{\Pi(c_1)\leq\Pi(c_2)}=0.375-0.125=0.25$.
Finally, from \eqref{eq:P_a}, we obtain $P_a=0.25+0.25=0.5$, which corresponds to the minimum activation probability as described before, therefore, we obtain $P_a\geq0.5$. This completes the proof.
\hspace{5.5cm}\qedsymbol  
\section{Proof of Proposition 5}
Note that, from Corollary 1, $N_a$ increases linearly with $N$ in a probabilistic manner. In what follows, we show that the SNR has a quadratic growth with $N_a$ and thus, with $N$.

Let $A_{n^*}=h_{n^*}g_{n^*}$ denote the $n^*$th term of the sum associated with the SNR in \eqref{eq:SNR}, which corresponds to a 2D vector in the complex plane. Consider the length associated with the sum of two given vectors $|A_{\tilde{n^*}}+A_{n^*}|=\sqrt{|A_{\tilde{n^*}}|^2+|A_{n^*}|^2+2|A_{\tilde{n^*}}||A_{n^*}|\cos(\bar{Z}_{n^*})}=d_n^*$, where $n^*, \tilde{n^*}\in\{1, ..., N_a\}$, $n^*\neq \tilde{n^*}$, denote the indices of a pair of two activated vectors and $\bar{Z}_{n^*}=|\arg(A_{n^*})-\arg(A_{\tilde{n^*}})|$ is the angle between them. Considering the worst-case separation scenario of two non-zero length vectors where $\pi/2<\bar{Z}_{n^*}\leq\pi$, then we obtain $d_n^*\leq\min(|A_{\tilde{n^*}}|,|A_{n^*}|))$ with $d_n\rightarrow 0$ as $\bar{Z}_{n^*}\rightarrow \pi$ and $||A_{\tilde{n^*}}|-|A_{n^*}||\rightarrow 0$. We generalize the previous case on all vectors, that is, let $N_a$ be an even number, and the activated vectors can be grouped into $N_a/2$ pairs of two vectors such that $|A_{n^*}|=|A_{\tilde{n^*}}|=A>0, \bar{Z}_{n^*}=\pi-\delta, \forall n^*, \tilde{n^*}\in\{1, ..., N_a\}$, where $\delta$ is an arbitrarily small positive number. Furthermore, without loss of generality, let each pair of vectors be symmetric around the real axis (zero phase), and the vectors on either side of the axis lie on top of each other. Note that this distribution of the vectors corresponds to the worst-case scenario, where, as $\delta\rightarrow 0$, the amplitude of the sum of all vectors converges to zero. In light of these, we get $|\sum_{n^*=1}^{N_a} A_{n^*}|=|\sum_{n^*=1}^{N_a} \Re\{A_{n^*}\}|>0$, where $\Re\{A_{n^*}\}$ denotes the real part of $A_{n^*}$, furthermore, all the terms of the sum have the same sign. This shows that $|\sum_{n^*=1}^{N_a} A_{n^*}|$ grows in an order of $N_a$, which also implies the linear growth with $N$, and, consequently, the SNR in \eqref{eq:SNR} grows in an order of $N^2$. Furthermore, from Algorithm 1, it can be seen that the search loop cannot exceed $2N$ iterations, thus; a complexity level of $\mathcal{O}(N)$ is obtained. This completes the proof.\hspace{7.8cm}\qedsymbol  
\vspace{-0.4cm}
%
\bibliographystyle{IEEEtran}
\bibliography{IEEEabrv,Bibliography}

\begin{thebibliography}{10}
\providecommand{\url}[1]{#1}
\csname url@samestyle\endcsname
\providecommand{\newblock}{\relax}
\providecommand{\bibinfo}[2]{#2}
\providecommand{\BIBentrySTDinterwordspacing}{\spaceskip=0pt\relax}
\providecommand{\BIBentryALTinterwordstretchfactor}{4}
\providecommand{\BIBentryALTinterwordspacing}{\spaceskip=\fontdimen2\font plus
\BIBentryALTinterwordstretchfactor\fontdimen3\font minus
  \fontdimen4\font\relax}
\providecommand{\BIBforeignlanguage}[2]{{%
\expandafter\ifx\csname l@#1\endcsname\relax
\typeout{** WARNING: IEEEtran.bst: No hyphenation pattern has been}%
\typeout{** loaded for the language `#1'. Using the pattern for}%
\typeout{** the default language instead.}%
\else
\language=\csname l@#1\endcsname
\fi
#2}}
\providecommand{\BIBdecl}{\relax}
\BIBdecl

\bibitem{Transmission_conference}
E.~Basar, ``Transmission through reconfigurable intelligent surfaces: A new
  frontier in wireless communications,'' in \emph{Proc. European Conf. Netw.
  Commun. (EuCNC)}, Valencia, Spain, June 2019.

\bibitem{EB2}
E.~{Basar}, ``Reconfigurable intelligent surface-based index modulation: A new
  beyond {MIMO} paradigm for {6G},'' \emph{IEEE Trans. Commun.}, vol.~68,
  no.~5, pp. 3187--3196, Feb. 2020.

\bibitem{AK1}
A.~Khaleel and E.~Basar, ``Reconfigurable intelligent surface-empowered {MIMO}
  systems,'' \emph{IEEE Syst. J.}, vol.~15, no.~3, pp. 4358--4366, Aug. 2021.

\bibitem{Expermintal-RIS}
W.~Tang \emph{et~al.}, ``Wireless communications with reconfigurable
  intelligent surface: Path loss modeling and experimental measurement,''
  \emph{IEEE Trans. Wireless Commun.}, vol.~20, no.~1, pp. 421--439, Jan. 2021.

\bibitem{AK2}
A.~Khaleel and E.~Basar, ``A novel {NOMA} solution with {RIS} partitioning,''
  \emph{IEEE J. Sel. Topics Signal Process.}, vol.~16, no.~1, pp. 70--81, Jan.
  2022.

\bibitem{mahmoud}
M.~Aldababsa, A.~Khaleel, and E.~Basar, ``Simultaneous transmitting and
  reflectingintelligent surfaces-empowered {NOMA} networks,'' Oct. 2021.
  [Online]. Available: arXiv:2110.05311.

\bibitem{phs-cont1}
F.~Liu \emph{et~al.}, ``Intelligent metasurfaces with continuously tunable
  local surface impedance for multiple reconfigurable functions,'' \emph{Phys.
  Rev. Applied}, vol.~11, Apr. 2019.

\bibitem{phs-cont2}
P.~Nayeri, F.~Yang, and A.~Z. Elsherbeni, \emph{Analysis and Design of
  Reflectarray Elements}, John Wiley \& Sons, Feb. 2018.

\bibitem{Toward-smart}
Q.~Wu and R.~Zhang, ``Towards smart and reconfigurable environment: Intelligent
  reflecting surface aided wireless network,'' \emph{IEEE Commun. Magazine},
  vol.~58, no.~1, pp. 106--112, Jan. 2020.

\bibitem{phs-disc}
L.~Li \emph{et~al.}, ``Machine-learning reprogrammable metasurface imager,''
  \emph{Nature commun.}, vol.~10, no.~1, pp. 1--8, Mar. 2019.

\bibitem{amp-RFocus}
V.~Arun and H.~Balakrishnan, ``{RF}ocus: Beamforming using thousands of passive
  antennas,'' in \emph{17th USENIX symposium on networked systems design and
  implementation (NSDI 20)}, Feb. 2020, pp. 1047--1061.

\bibitem{phs-group1}
R.~Zhong, Y.~Liu, X.~Mu, Y.~Chen, and L.~Song, ``{AI} empowered {RIS}-assisted
  {NOMA} networks: Deep learning or reinforcement learning?'' \emph{IEEE J.
  Sel. Areas Commun.}, vol.~40, no.~1, pp. 182--196, Jan. 2022.

\bibitem{phs-group2}
Z.~Mao \emph{et~al.}, ``Element-grouping intelligent reflecting surface:
  Electromagnetic-compliant model and geometry-based optimization,'' \emph{IEEE
  Trans. Wireless Commun.}, pp. 1--1, Jan. 2022.

\bibitem{phs-incid}
W.~Chen \emph{et~al.}, ``Angle-dependent phase shifter model for reconfigurable
  intelligent surfaces: Does the angle-reciprocity hold?'' \emph{IEEE Commun.
  Lett.}, vol.~24, no.~9, pp. 2060--2064, Sep. 2020.

\bibitem{phs-amp-exp1}
B.~O. Zhu, J.~Zhao, and Y.~Feng, ``Active impedance metasurface with full
  $360^\circ$ reflection phase tuning,'' \emph{Scientific Reports}, vol.~3,
  Oct. 2013.

\bibitem{main_pract}
S.~Abeywickrama, R.~Zhang, Q.~Wu, and C.~Yuen, ``Intelligent reflecting
  surface: Practical phase shift model and beamforming optimization,''
  \emph{IEEE Trans. Commun.}, vol.~68, no.~9, pp. 5849--5863, June. 2020.

\bibitem{phs-frq-wide}
W.~Cai, H.~Li, M.~Li, and Q.~Liu, ``Practical modeling and beamforming for
  intelligent reflecting surface aided wideband systems,'' \emph{IEEE Commun.
  Lett.}, vol.~24, no.~7, pp. 1568--1571, Jul. 2020.

\bibitem{phs-frq-mimo}
H.~Li \emph{et~al.}, ``Intelligent reflecting surface enhanced wideband
  {MIMO-OFDM} communications: From practical model to reflection
  optimization,'' \emph{IEEE Trans. Commun.}, vol.~69, no.~7, pp. 4807--4820,
  Jul. 2021.

\bibitem{phs-frq-emwm}
F.~Costa and M.~Borgese, ``Electromagnetic model of reflective intelligent
  surfaces,'' \emph{IEEE Open J. Commun. Soc.}, vol.~2, pp. 1577--1589, June
  2021.

\bibitem{opt_ris}
M.~Jung, W.~Saad, M.~Debbah, and C.~S. Hong, ``On the optimality of
  reconfigurable intelligent surfaces ({RISs}): Passive beamforming,
  modulation, and resource allocation,'' \emph{IEEE Trans. Wireless Commun.},
  vol.~20, no.~7, pp. 4347--4363, Jul. 2021.

\bibitem{phs-mutual1}
G.~Gradoni and M.~Di~Renzo, ``End-to-end mutual coupling aware communication
  model for reconfigurable intelligent surfaces: An electromagnetic-compliant
  approach based on mutual impedances,'' \emph{IEEE Wireless Commun. Lett.},
  vol.~10, no.~5, pp. 938--942, May 2021.

\bibitem{phs-mutual2}
E.~Björnson, ``Optimizing a binary intelligent reflecting surface for {OFDM}
  communications under mutual coupling,'' Sep. 2021. [Online]. Available:
  arXiv:2106.042802021.

\bibitem{phs-err2-let}
M.-A. Badiu and J.~P. Coon, ``Communication through a large reflecting surface
  with phase errors,'' \emph{IEEE Wireless Commun. Lett.}, vol.~9, no.~2, pp.
  184--188, Feb. 2020.

\bibitem{spatial_ris}
C.~Psomas and I.~Krikidis, ``{SWIPT} with intelligent reflecting surfaces under
  spatial correlation,'' \emph{IEEE Wireless Commun. Lett.}, vol.~10, no.~9,
  pp. 1924--1928, Sep. 2021.

\bibitem{near_pas}
Z.~Sattar and A.~Haghighat, ``Nearly passive reconfigurable intelligent surface
  with constant phase-shifts,'' in \emph{Proc. IEEE 32nd Annu. Int. Symp.
  Pers., Indoor Mobile Radio Commun. (PIMRC)}, Sep. 2021, pp. 794--799.

\bibitem{exp_ris}
M.-M. Zhao, Q.~Wu, M.-J. Zhao, and R.~Zhang, ``Exploiting amplitude control in
  intelligent reflecting surface aided wireless communication with imperfect
  {CSI},'' \emph{IEEE Trans. Commun.}, vol.~69, no.~6, pp. 4216--4231, June
  2021.

\bibitem{amp-on-off1}
A.~Li, L.~Song, B.~Vucetic, and Y.~Li, ``Interference exploitation precoding
  for reconfigurable intelligent surface aided multi-user communications with
  direct links,'' \emph{IEEE Wireless Commun. Lett.}, vol.~9, no.~11, pp.
  1937--1941, Nov. 2020.

\bibitem{amp-on-off2}
P.~Mursia \emph{et~al.}, ``{RISMA}: Reconfigurable intelligent surfaces
  enabling beamforming for {IoT} massive access,'' \emph{IEEE J. Sel. Areas
  Commun.}, vol.~39, no.~4, pp. 1072--1085, Apr. 2021.

\bibitem{S-CSI}
{T. V. Chien \textit{et al.}}, ``Controlling smart propagation environments:
  Long-term versus short-term phase shift optimization,'' in \emph{IEEE . Int.
  Conf. Acoust. Speech, Signal Processing (ICASSP)}, Singapore, Singapore, Apr.
  2022, pp. 5348--5352.

\bibitem{phs-amp-exp2}
H.~Rajagopalan and Y.~Rahmat-Samii, ``Loss quantification for microstrip
  reflectarray: Issue of high fields and currents,'' in \emph{2008 IEEE
  Antennas and Propagation Soc. Int. Symp.}, Jul. 2008, pp. 1--4.

\bibitem{phs-err1-uni}
L.~Kong, S.~Kisseleff, S.~Chatzinotas, B.~Ottersten, and M.~Erol-Kantarci,
  ``Effective rate of ris-aided networks with location and phase estimation
  uncertainty,'' in \emph{IEEE Wireless Commun. Netw. Conf. (WCNC)}, Austin,
  TX, USA, May. 2022, pp. 2071--2075.

\bibitem{phs-err3-let}
T.~Wang, M.-A. Badiu, G.~Chen, and J.~P. Coon, ``Outage probability analysis of
  {RIS}-assisted wireless networks with {Von Mises} phase errors,'' \emph{IEEE
  Wireless Commun. Lett.}, vol.~10, no.~12, pp. 2737--2741, Dec. 2021.

\bibitem{von_mises}
K.~V. Mardia and P.~E. Jupp, \emph{Directional Statistics}.\hskip 1em plus
  0.5em minus 0.4em\relax John Wiley \& Sons, 2009, vol. 494.

\bibitem{NP}
{L. A. Wolsey and G. L. Nemhauser}, \emph{Integer and Combinatorial
  Optimization}.\hskip 1em plus 0.5em minus 0.4em\relax New York: John Wiley \&
  Sons, 1988.

\bibitem{N_a_upper}
{ A. Ramachandra, K. Natarajan}, ``Tight probability bounds with pairwise
  independence,'' {May. 2020. [Online]. Available: arXiv:2006.00516}.

\bibitem{binom-aprox}
J.~M. Gonzalez-Barrios, ``Sums of nonindependent {Bernoulli} random
  variables,'' \emph{Brazilian J. Probab. Stat.}, pp. 55--64, June 1998.

\bibitem{spatial-corr}
E.~Björnson and L.~Sanguinetti, ``Rayleigh fading modeling and channel
  hardening for reconfigurable intelligent surfaces,'' \emph{IEEE Wireless
  Commun. Lett.}, vol.~10, no.~4, pp. 830--834, Apr. 2021.

\bibitem{Ellingson}
{S.W. Ellingson}, ``Path loss in reconfigurable intelligent surface-enabled
  channels,'' in \emph{Proc. IEEE 32nd Annu. Int. Symp. Per., Indoor Mobile
  Radio Commun. (PIMRC)}, Sep. 2021, pp. 829--835.

\bibitem{EMI-ris}
A.~d.~J. Torres, L.~Sanguinetti, and E.~Björnson, ``Electromagnetic
  interference in {RIS}-aided communications,'' \emph{IEEE Wireless Commun.
  Lett.}, pp. 1--1, Nov. 2021.

\bibitem{ris-auto}
K.~Ntontin \emph{et~al.}, ``Toward autonomous reconfigurable intelligent
  surfaces through wireless energy harvesting,'' Aug. 2021. [Online].
  Available: arXiv preprint arXiv:2108.07953.

\bibitem{spacing_ele2}
O.~Tsilipakos \emph{et~al.}, ``Toward intelligent metasurfaces: The progress
  from globally tunable metasurfaces to software-defined metasurfaces with an
  embedded network of controllers,'' \emph{Advanced Optical Materials}, vol.~8,
  no.~17, p. 2000783, Jul. 2020.

\bibitem{spacing_ele1}
{Ö. Özdogan}, E.~Björnson, and E.~G. Larsson, ``Intelligent reflecting
  surfaces: Physics, propagation, and pathloss modeling,'' \emph{IEEE Wireless
  Commun. Lett.}, vol.~9, no.~5, pp. 581--585, May. 2020.

\bibitem{unif-err}
R.~C. Ferreira, M.~S.~P. Facina, F.~A.~P. De~Figueiredo, G.~Fraidenraich, and
  E.~R. De~Lima, ``Bit error probability for large intelligent surfaces under
  double-{Nakagami} fading channels,'' \emph{IEEE Open J. Commun. Soc.},
  vol.~1, pp. 750--759, May 2020.

\end{thebibliography}
\vspace{-2cm}
\begin{IEEEbiography}[{\includegraphics[width=1in,height=1.25in,clip,keepaspectratio]{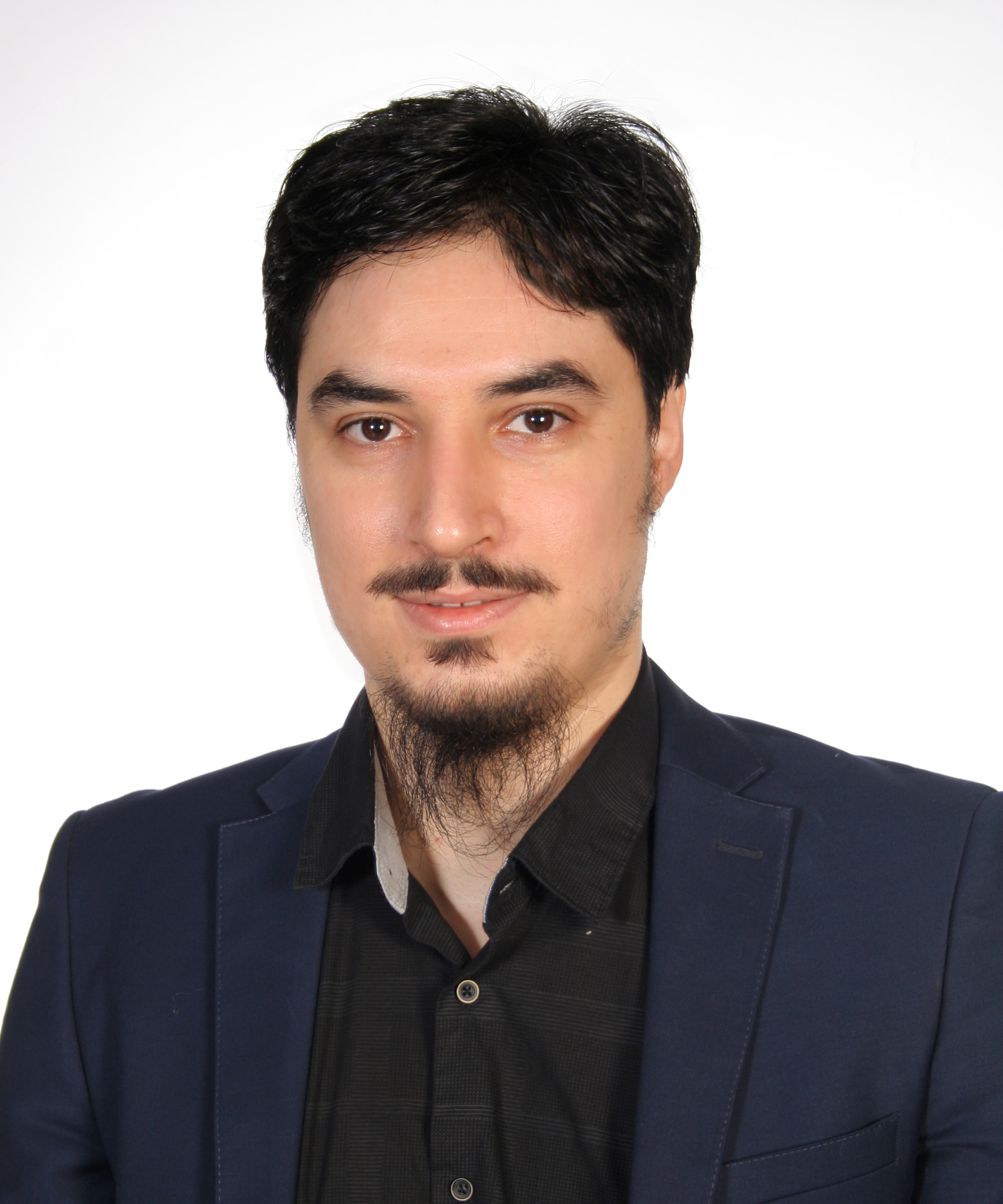}}]{Aymen Khaleel {\normalfont received the B.Sc. degree from the University of Anbar, Al Anbar, Iraq, in 2013, and the M.Sc. degree from Turkish Aeronautical Association University, Ankara, Turkey, in 2017. He is currently pursuing his Ph.D. in Electrical and Electronics Engineering at Ko\c{c} University, Istanbul, Turkey, where he is currently a Project Engineer. His research interests include MIMO systems, index modulation, reconfigurable intelligent surfaces-based systems. He serves as a Reviewer for  \textit{IEEE Transactions on Wireless Communications}, \textit{IEEE Transactions on Vehicular Technology}, \textit{IEEE Communications Magazine}, \textit{IEEE Wireless Communications Letters}, and \textit{IEEE Communications Letters}.}}
\end{IEEEbiography}
\vspace{-1.5cm}
\begin{IEEEbiography}[{\includegraphics[width=1in,height=1.25in,clip,keepaspectratio]{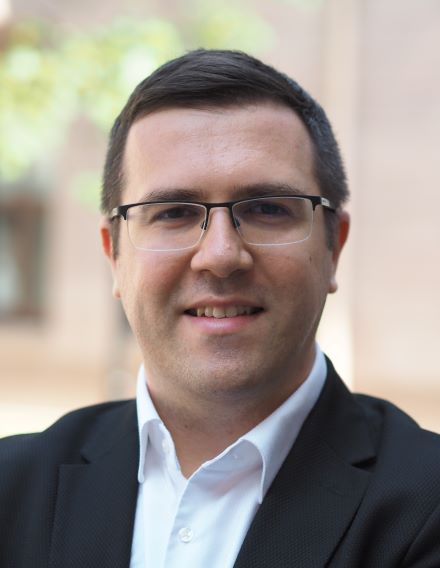}}]{Ertugrul Basar {\normalfont received his Ph.D. degree from Istanbul Technical University in 2013. He is currently an Associate Professor with the Department of Electrical and Electronics Engineering, Ko\c{c} University, Istanbul, Turkey and the director of Communications Research and Innovation Laboratory (CoreLab). His primary research interests include beyond 5G systems, index modulation, intelligent surfaces, waveform design, and signal processing for communications. Dr. Basar currently serves as a Senior Editor of \textit{IEEE Communications Letters} and an Editor of \textit{IEEE Transactions on Communications} and \textit{Frontiers in Communications and Networks}. He is a Young Member of Turkish Academy of Sciences and a Senior Member of IEEE.}}
	
\end{IEEEbiography}
\end{document}